\newtheorem{proposition}{Proposition}[section]
\newtheorem{cor}{Corollary}[section]
\newtheorem{lemma}{Lemma}[section]
\begin{document}

\title{A Risk-Based Equilibrium Analysis of Energy Imbalance Reserve in Day-Ahead Electricity Markets}

\author{Ryan Ent, Golbon Zakeri, Jinye Zhao,~\IEEEmembership{Member,~IEEE}, Tongxin Zheng~\IEEEmembership{Fellow,~IEEE}
\thanks{Jinye Zhao and Tongxin Zheng are with the Independent System Operator of New England Inc., Holyoke, MA 01040 USA}
\thanks{Ryan Ent and Golbon Zakeri are with the Department of Industrial Engineering and Operations Research at the University of Massachusetts, Amherst, MA 01003} }
\maketitle
\begin{abstract}
Energy imbalance reserve (EIR) product is a new product in the Independent System Operator (ISO) of New England's day-ahead wholesale electricity market. Different from existing forward reserve products, EIR is a novel real option product, which is settled against the real-time energy price rather than reserve prices. This novel product has not been analyzed in the research literature in terms of its effects. In this paper, we develop a stochastic long-run equilibrium model that incorporates the risk preference of  generator and demand agents participating in the energy and reserve market in both day-ahead and real-time time frame. In a risk neutral environment, we find that the presence of the EIR product makes little difference on market outcomes. We also conduct a series of numerical simulations  with risk-averse generators and demand, and observed increased advanced fuel procurement when the EIR product is present.  
\end{abstract}
\begin{IEEEkeywords}
Electricity markets, reserves, equilibrium analysis, complementarity problems, day-ahead markets, options
\end{IEEEkeywords}
\section{Background}
\IEEEPARstart{N}ew England's power system has been relying on natural gas-fired generating units. In fact, 55\% of electricity in 2024 was produced from natural gas \cite{isoresourcemix}. Natural gas is primarily delivered to New England via interstate pipelines, making it especially vulnerable during winter cold spells when residential heating demand surges. During cold spells, gas supply is often constrained and New England gas power plants that rely on just-in-time delivery of natural gas face challenges of procuring fuel in real time. These gas plants often do not have sufficient incentives to make advance arrangements with gas suppliers if they do not expect to operate in real time (i.e. do not have any day-ahead market awards). Of all the ISO regions in the US, ISO New England (ISO NE)'s gas power fleet have the second lowest rate of firm contracts, with 45\% of plants purchasing firm contracts in 2024 \cite{eia923}. Instead, they opt for interruptible contracts on the spot market, which puts them at the lowest priority for fuel delivery when the natural gas infrastructure is constrained. \footnote{In the models presented in Section II, we assume that generators are able to choose between investing in fuel at a single fixed price in advance of real time generation or purchase fuel with a range of different prices that correlate with real time electricity demand on a spot market. The choice between fixed and uncertain fuel costs are not identical to the choice between interruptible or firm fuel contracts, but serve as a proxy for this investment choice on a shorter time scale.}

    In addition, there is frequently a gap between the energy cleared in the day-ahead market (DAM) and the real-time load. In 2023, this gap was positive in 64\% of all hours in 2023, with an hourly average value of 515.7 MW and a maximum value of 3397.9 MW during hours where the energy cleared in the DAM is less than the real-time load \cite{energygapdataset}. Similar persistent real time and day-ahead imbalances exist in other US ISO/RTO regions, even prompting the California system operator (CAISO) to create a new imbalance reserve product \cite{caiso_ir}, \cite{california_iso_day_2023}. Because additional resources needed to meet the gap are often procured after the DAM clears through the Reserve Adequacy Analysis (RAA) process, market clearing prices in the DAM do not reflect the actual cost of meeting real-time load, adversely affecting market transparency. More importantly, participants do not have strong incentives to procure fuel in advance if they do not have any DAM obligations.   

\begin{figure}
    \centering
    \includegraphics[scale = 0.6]{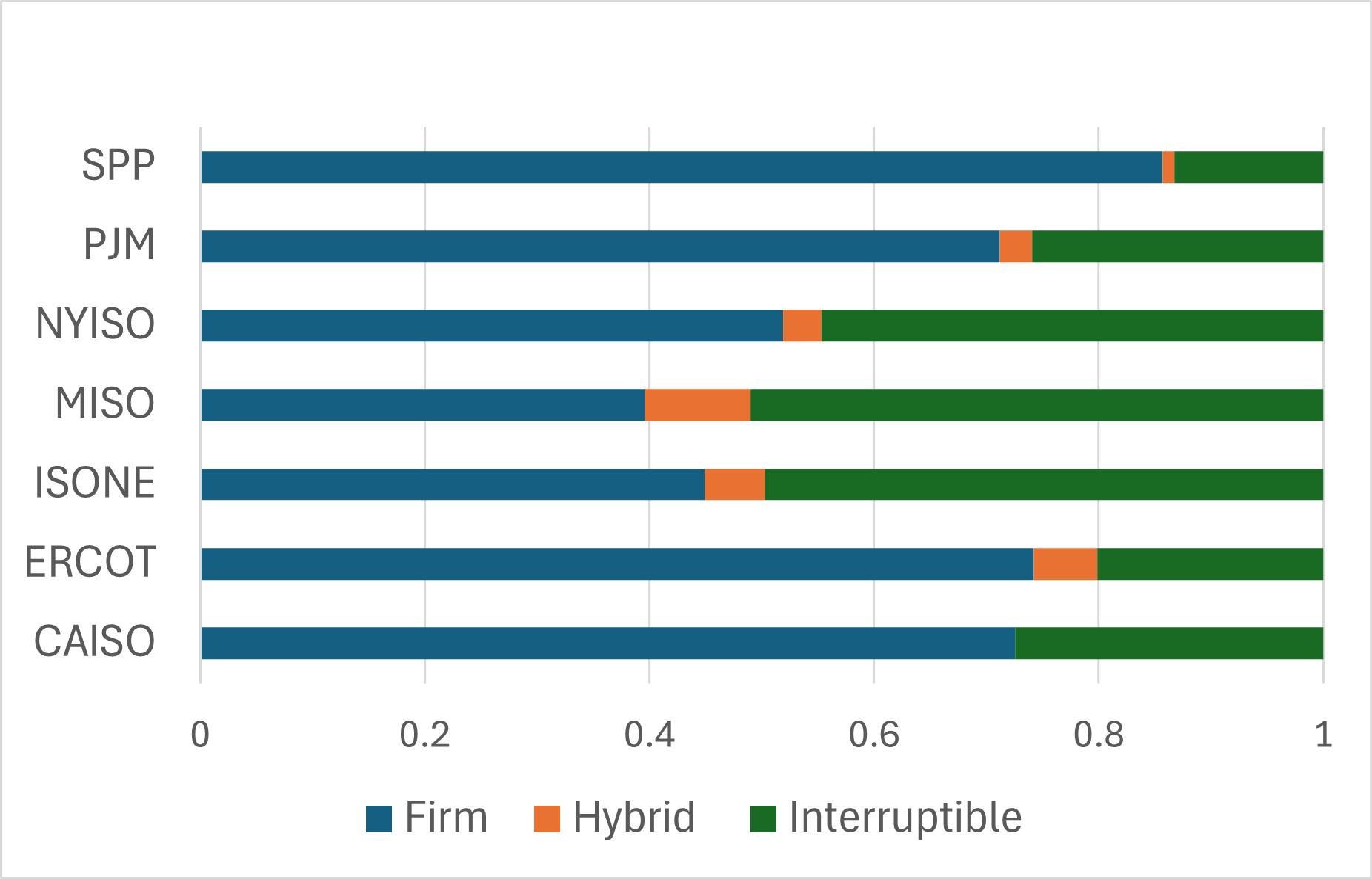}
    \caption{Natural Gas Power Plant Fuel Contracts by Type 2024 }
    \label{fig:enter-label}
\end{figure}

\begin{figure}
    \centering
    \includegraphics[scale = 0.5]{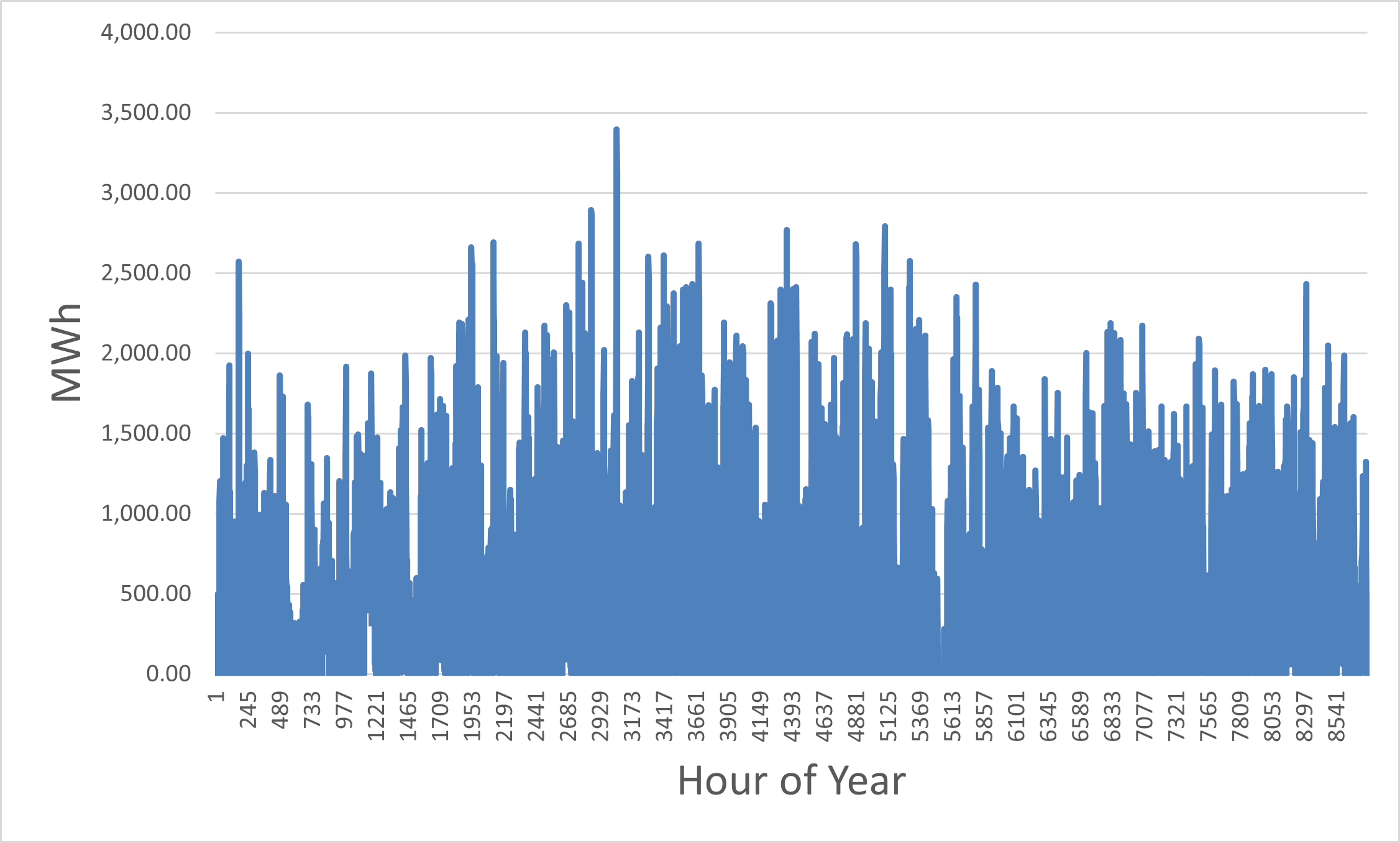}
    \caption{2024 Energy Gap} 
    \label{fig:enter-label}
\end{figure}

In order to address these challenges \cite{white}, ISO NE introduced an enhanced DAM on February 28th, 2025, including a novel ancillary service product called (EIR) \cite{clements_186_nodate}. EIR is designed to balance the gap between the real-time load predicted by ISO NE and energy cleared in the DAM. The day-ahead market will now clear both day-ahead energy and energy imbalance reserve to meet the load forecast. EIR is an energy call "option" product in the DAM where sellers receive an EIR payment in DAM and potentially may incur a close-out cost based on real-time energy prices and a strike price set by the ISO. 

One of ISO NE's central objectives of introducing the EIR mechanism is to better align individual market participant's incentives with the system's overall reliability objectives. The ISO must ensure that sufficient generation is available to meet demand in all possible real-time scenarios. For a natural gas generator, this means it should secure adequate fuel supply ahead of the operating day instead of relying on the spot market in real time. However, purchasing fuel ahead of time exposes generators to the risk of financial loss if they are not dispatched, rendering fuel unused or resold. The goal of the EIR product is to mitigate this risk by providing a market-based mechanism where generators receive a day-ahead payment for EIR awards and face a real-time closeout cost if real-time energy prices exceed a pre-determined strike price. By providing a financial signal ahead of real time, EIR may motivate generators to procure fuel when it is socially desirable but not privately profitable without the EIR mechanism. In addition, by incorporating the ISO's real-time load forecast as the forecast energy requirement (FER) constraint in the DAM and pricing it directly, the EIR product improves DAM price formation and reduces the need for out-of-market actions such as RAA.

\section{Modeling Framework, Literature Review and Contributions}
The EIR product is the first option product of its kind that is cleared in conjunction with other energy and ancillary service products in the wholesale electricity market. The effectiveness of the product in terms of fuel procurement has not been studied in the literature. In this paper, we establish a framework for examining how the EIR option mechanism affects market participants behaviors and fuel procurement incentives. The models capture three types of agents, namely generators, demand and arbitragers, who participate in both the day-ahead and real-time markets under the uncertainty in real-time load and fuel prices. 

The model follows a two-stage market structure, consistent with ISO's practices  \cite{isone_ed}, \cite{hoganpjmed}, \cite{ferc_ed}. In the first stage (day-ahead), participants make decisions based on foretasted real-time conditions. In the second stage (real-time), under each scenario, which represents a possible real-time demand and fuel price realization, only generators make decisions on whether or not to purchase additional fuel and real time production quantities.

    More specifically, we have: 
\begin{itemize}
\item Generators: day-ahead energy awards, EIR awards, and day-ahead fuel purchase.
\item Demand: day-ahead bid-in demand quantity.
\item Arbitragers: day-ahead energy awards
\end{itemize}
In the second stage (real-time), under each scenario, which represents a possible real-time demand and fuel price realization, agents make the following decisions:
\begin{itemize}
\item Generators: purchase additional fuel at the gas spot market if needed, produce real-time energy, and settle the EIR product based on real-time prices and predetermined strike prices. 
\item Demand: settles its realized consumption. 
\item Arbitragers: settle their positions based on real-time prices.
\end{itemize}

It is important to develop the two-stage stochastic structure because market participants must make key decisions, such as day-ahead energy sales or purchases, EIR awards and day-ahead fuel purchase before the realization of uncertain real-time conditions. The stochastic nature of the model captures the uncertainty of possible real-time outcomes and enables us to evaluate risk-hedging behavior with and without the EIR product.

The framework compares two market paradigms:
\begin{itemize}
    \item Energy Market Only (EMO): A baseline energy only model without EIR that is a simplified reflection of ISO-NE's market prior to February 28, 2025
    \item Energy Market with Imbalance Reserve (EMIR): A co-optimized model where the EIR product is procured to meet a system-level forecast energy requirement (FER). 
\end{itemize}

To capture the impact of risk preference on market outcomes and participants' behaviors, the objective functions of participants' profit maximization problem use Conditional Value-at-Risk (CVAR). CVAR represents the generator's expected profits that fall below a percentile of the profit distribution. We adopt the CVAR formulation from Rockafellar and Uryasev \cite{cvar}. 

Our analysis is based on a competitive equilibrium framework. Under this framework, agents are price takers who maximize their profit given market clearing prices. They bid truthfully and do not exercise market power to manipulate market clearing outcomes. The competitive equilibrium model enables us to analyze how risk preferences and market structure influence outcomes without the confounding effect of agents' strategic manipulation.

The resulting market equilibrium is the solution to a mixed complementarity problem (MCP) derived from the optimality conditions of the agents' profit maximization problems and market-clearing constraints. The use of mixed complementarity problems to model the equilibrium of multiple agents participating in an electricity market is well established in the operations research literature on power markets. \cite{dart_eqb} models generators participating in both day-ahead and real time markets.  \cite{gencaprisk} and \cite{smeers_co-optimization_2022} use two-stage stochastic equilibrium models to explore capacity expansion with risk-averse agents and utilize CVAR as their risk measure. \cite{gurkan_generation_nodate} and \cite{de_maere_daertrycke_investment_2017} use MCPs to compare different market structures and policies, specifically how the presence of risk mitigation instruments changes investment behavior and social welfare.  A full treatment of mixed complementarity modeling in energy systems can be found in \cite{gabriel_complementarity_2013} and a review of applications in electricity markets can be found in \cite{dimitriadis_review_2021} and \cite{ruiz_tutorial_2014}. Also, see \cite{murphy_tutorial_2016} for an introduction and tutorial on the use of complementarity problems to analyze different market policies. All the MCPs discussed in this paper were implemented in GAMS and solved using the PATH solver \cite{dirkse_path_1995}.

To maintain tractability and focus on impact of EIR production on equilibrium outcomes, we adopt the following modeling simplifications:
\begin{itemize}
\item Unit commitment and other intertemporal constraints are ignored in the equilibrium models. Also, in reality, fuel procurement can be binary or lumpy decisions. However, modeling binary variables introduces nonconvexity and create computational challenges that are beyond the scope of this paper. Our continuous model approximates fuel decisions at an aggregate level, which still provides meaningful insights into incentive effects of EIR. Generators are assumed to be flexible enough to be dispatched from zero to their maximum capacity, allowing us to use equilibrium models to explore their behaviors in a tractable manner. 
\item Arbitragers are assumed to be purely financial risk-neutral participants who maximize expected profit and face no transaction or administrative costs. In practice, arbitragers may incur small costs that result in a small difference between the day-ahead price and the expected real-time price. However, these costs are typically small and do not significantly affect equilibrium behaviors. This simplification is also a standard approach in equilibrium modeling. It facilitates analytical comparisons across different market designs.

\item The system is modeled as a single-node system so that it helps the analysis focus on the core economic impact of EIR. Ignoring transmission constraints does not affect the main conclusions obtained in this paper. 

\item The traditional operating reserve products, such as 10-minute and 30-minute reserves are not included in the model. The exclusion of these reserve products allows us to focus the model specifically on the EIR product and its intended function of incentivizing advance fuel procurement. While these operating reserve products are important for system reliability, their exclusion does not undermine the ability of the proposed equilibrium model to evaluate the specific role of EIR in influencing market clearing outcomes and fuel procurement behaviors.

\end{itemize}

Our modeling approach focuses on exploring two different aspects. First, we want to understand how the presence of energy imbalance reserve affects the solution of the equilibrium problems, specifically how it changes generators' fuel purchase decision and the behavior of a demand agent bidding in the day-ahead market. Secondly, the two sets of models, risk neutral and risk averse, shed some light in circumstances in which the introduction of EIRs may be an effective policy for achieving its design objectives.

The main contributions and findings of this paper are:
\begin{itemize}
    \item This is the first work to develop a mixed complementarity model to represent the competitive equilibrium of multiple generator and demand agents participating in an energy and reserve market under uncertainty to gain insight into what may occur as a result of the introduction of EIRs.
    \item It is shown that when participants are risk neutral and generators' total capacity exceeds the forecasted real-time demand, the EIR product may not impact market outcomes or participants' fuel procurement behaviors relative to an energy-only DAM in the absence of unit commitment constraints. This finding highlights that participant risk preferences play an important role in realizing the intended benefits of the EIR product.
    \item When participants are risk averse, it is shown that the presence of EIR can change generators' behaviors. In particular, generators become more willing to procure firm fuel in advance of operating day. The EIR mechanism helps shift risk-bearing from the system operator to risk-averse market participants.  
\end{itemize}
The equilibrium modeling framework developed in this paper can be extended to analyze other market design questions related to uncertainty, fuel security, interactions between reliability and financial products, e.g. forward reserves.

\section{Two-Stage Stochastic Equilibrium Models for Markets With and Without EIR}

In this section, we present two-stage stochastic competitive equilibrium models that characterize the behavior of three types of market participants, generators, demand and arbitragers, in both the conventional EMO and newly enhanced EMIR. For each agent type, we develop its profit maximization problem incorporating its risk preferences via CVAR. These optimization problems are formulated separately for the market with energy only as well as the market with energy and EIR.

To obtain competitive equilibrium, we simultaneously solve all agents' optimization problems, along with market clearing conditions. This ensures that no participant has an incentive to deviate and the market reaches an equilibrium. Mathematically, this is achieved by converting each agent's optimization problem into its Karush-Kuhn-Tucker (KKT) conditions. By stacking the KKT conditions for all agents and market clearing conditions, we formulate an MCP. The solutions to this MCP correspond to equilibrium outcomes of the EMO or EMIR.

In a traditional electricity market, supply and demand submit price-quantity pairs that are used to clear the market. In our models, the agents' decisions are restricted to the quantities that they bid or offer into the electricity market. The prices are determined by the full equilibrium model and are set at the level to balance supply and demand across the different markets. While our formulation differs from the actual market clearing process, it is a reasonable approach if we assume that each agent bids in its true cost.  This difference does not affect the validity of the competitive equilibrium results because price-taking behaviors assumes agents respond to market prices rather than influence them, and the clearing prices in our model still reflect marginal cost principle under equilibrium conditions.
    \begin{table}
    \caption{Sets}
    \label{tab:my_label}
    \centering
    \begin{tabular}{cc}
    \hline 
       $i \in \mathcal{I}$ & Generators\\
       $s \in \mathcal{S}$ & Real Time Demand Scenarios \\
       
    \hline
    \end{tabular}
    
     \end{table}

\begin{table}
 \caption{Variables with Index i pertain to Generator i and Variables with Index s pertain to Scenario s}
    \label{tab:my_label}
    \centering
    
    \begin{tabular}{cc}
    \hline 
       $g_i^{DA}$  & day-ahead Energy Award (MWh)\\
       $e_i$ & Energy Imbalance Reserve (MWh) \\
       $g_{i,s}^{RT}$  & Real Time Production (MWh)\\       
       $V^{DA}_{i}$  & day-ahead Fuel Purchase (MWh)\\
       $V^{RT}_{i,s}$  & Spot Fuel Purchase (MWh)\\
       $W^{RT}_{i,s}$ & Remaining Fuel (MWh) \\
         $\lambda_s^{RT}$  & Real Time Energy Price (\$/MWh)\\
         $\lambda^{DA}$  & day-ahead Energy Price (\$/MWh)\\
         $\rho$ & Energy Imbalance Reserve Price\\
        $d^{DA}$  & day-ahead Demand (MWh)\\ 
        $\mu_{i,s}$ & Marginal Value of Fuel Investment (\$/MWh) \\
        $\eta$ & Value at Risk \\
        $q_{i,s}$ & Risk Adjusted Probability \\
        $Z_{i,s}$ & Per Scenario Profit of Generator i\\
        $Z^D_{s}$ & Per Scenario Profit of Demand\\
     \hline
    \end{tabular}
   
\end{table}
\quad
\begin{table}
\caption{Parameters}
    \label{tab:my_label}
    \centering
    \begin{tabular}{cc}
    \hline 
       $C_i$  & Marginal Production Cost Excluding Fuel Costs(\$/MWh)\\\
       $C^I_{i,s}$  & Intraday Fuel Cost (\$/MWh)\\
       $C^F_i$  & Advanced Fuel Cost (\$/MWh)\\
       $R_{i,s}$ & Fuel Resale Price \\
       $F_i$ & Generator Fuel Level (MWh)\\
        $Q_i$ & Maximum Capacity (MWh)\\
        $D^{RT}_s$ & Fixed Real Time Demand (MWh) \\
        $D^{DA}$ &Fixed day-ahead Demand (MWh) \\
         $FER$ & Forecast Energy Requirement (MWh)\\ 
        $\pi_s$ & Probability of Scenario s \\
        $\alpha_i$ & Risk Level \\
        $K$ & Strike Price (\$/MWh) \\
    \hline
    \end{tabular}
    
\end{table}
\subsection{Equilibrium Model for Energy Market Only}
\subsubsection{Generators' Profit Maximization Problem in EMO} 
The individual generator's optimization problem is to maximize their risk adjusted profits. In our set up, any real time production must be supported by having the fuel to generate in real time with the option to purchase fuel in advance, for a fixed price $C^F_i$, or at a spot price $C^I_{i,s}$ that is random and correlated with real time demand scenario $s$. The generator's problem in absence of EIR products is presented below with the associated dual variables in brackets to the right of the constraints.
\begin{subequations}
\begin{align}
\max_{\Xi}&\textcolor{white}{s} \eta_i- \frac{1}{a_i}\sum_s \pi_su_{i,s}   \\
\mbox{s/t}&\textcolor{white}{s}u_{i,s} \ge \eta_i - Z_{i,s}&[q_{i,s}] \\
&g^{DA}_i \le Q_i &[\delta_i] \\
& g^{RT}_{i,s} \le Q_i &[\gamma_{i,s}]\\
& g^{RT}_{i,s} = F_i + V_{i}^{DA}  + V_{i,s}^{RT}- W_{i,s}^{RT} &[\mu_{i,s}]\\
& g^{RT}_{i,s},g^{DA}_i,V^{DA}_i,V^{RT}_{i,s},W^{RT}_{i,s},u_{i,s} \ge 0 &  \forall s 
\end{align}
\end{subequations}

where
\begin{align}
\Xi := (g^{RT}_{i,s},g^{DA}_i,u_{i,s},V^{DA}_i,V^{RT}_{i,s},W^{RT}_{i,s},\eta_i,u_{i,s})\nonumber\\
  Z_{i,s} := \lambda^{DA}g^{DA}_i +\lambda^{RT}_s(g^{RT}_{i,s}-g^{DA}_i)-\label{Gen_CVARprofit_EMO} \\ C_ig^{RT}_{i,s}- C^{F}_iV_{i}^{DA} - C^{I}_{i,s}V_{i,s}^{RT} + R_{i,s}W_{i,s}^{RT} \nonumber
 \end{align}
The generator's objective function captures the CVAR of its total day-ahead and real-time profit across uncertain real-time scenarios. This formulation follows the approach introduced by Rockafellar and Uryasev \cite{cvar}. In the model, the generator maximizes its risk adjusted profit by choosing a benchmark profit level $\eta_i$ and penalizing the expected profit below that level using $u_{i,s}$. The non-negative variable $u_{i,s}$ reflects how much the actual profit in scenario $s$ is below this benchmark profit level $\eta_i$ via the first inequality constraint. In other words, The generator agent $i$ hedges against its lowest $\alpha_i$ quantile of profits. The parameter $\alpha_i\in(0,1]$ corresponds to the generator's level of risk aversion.   When $\alpha_i=1$, the generator behaves in a risk-neutral manner and maximizes its expected profit. As $\alpha_i$ decreases, the generator becomes more risk-averse, emphasizing performance in the tail of the profit distribution. This CVAR formulation allows us to examine generator's behaviors under different risk preferences while keeping the model tractable.

The second and third inequality constraints are related to physical limitations on real-time and day-ahead decision variables, where $Q_i$ represents the economic maximum or capacity of a generator. The fourth equality constraint enforces $g_{i,s}^{RT}$ to be equal to some standard fuel level $F_i$, which can be thought of as a generator's starting fuel position before investing in additional fuel on the gas market and in this paper is set at a level of zero for all of the numerical examples, plus any day-ahead $V^{DA}_i$ or real time $V^{RT}_{i,s}$ fuel investment, and less any fuel leftover $W^{RT}_{i,s}$, which can be thought of as fuel to be sold after the real-time market is closed. The corresponding shadow prices, $\mu_{i,s}$, can be interpreted as the marginal value of fuel investment.

\subsubsection{Demand's Profit Maximization Problem in EMO}\label{EMO_demand_profit_section} 
In this section, we provide a demand agent that represents the aggregate of all demand side market participants and pays for all energy in EMO. $d^{DA}$ is the sole decision variable of this agent, as we assume their real time loads are fixed and represented by $D^{RT}_s$. Demand agent's problem in absence of EIR options can be described as follows:
\begin{subequations}
\begin{align}
\underset{{\eta^D,u_s^D,d^{DA}}}{\mbox{max}} & \eta^D - \frac{1}{\alpha^D}\sum{\pi_s}u_{s}^D  \\
\mbox{s/t}& u_{s}^D \ge \eta^D - Z_{s}^D& [q_{s}^D] \\
& u_{s}^D,d^{DA} \ge 0 & \forall s
\end{align}
\end{subequations}
where
\begin{equation}
  Z_{s}^D := -\lambda^{DA}d^{DA}-\lambda^{RT}_s(D^{RT}_{s}-d^{DA}) \label{dem_CVARprofit_EMO} \\
\end{equation}
The demand agent maximizes its risk adjusted profit, similar to the generator's counterpart. When $\alpha^D=1$, the demand agent is risk neutral.

\subsubsection{Arbitragers' Profit Maximization Problem in EMO} \label{EMO_arbitragers_profit_section} 

 Every ISO market in the US allows for the participation of virtual market participants. These entities are strictly financial in that they neither produce or consume electricity in real time, but merely seek to exploit arbitrage opportunities between the day-ahead and real time markets. They are able to offer in virtual supply (commonly referred to as incremental offers or INCs) or virtual demand (decremental bids or DECs), in the day-ahead market which is then closed out in real time. The arbitragers' problem is to maximize expected profits through their decision variables, $a$ and $b$, which respectively represent INCs and DECs.

\begin{subequations}
\begin{align}
\underset{a,b}{\mbox{max}}& \textcolor{white}{hhh} E_s[Z_{s}] \\
\text{s/t} &\textcolor{white}{hhh}a,b \ge 0\\
  &Z_{s} := (\lambda^{DA} - \lambda^{RT}_s)a +  (\lambda^{RT}_s - \lambda^{DA})b
\end{align}
\end{subequations}

Arbitragers are assumed to be always risk-neutral. Since their positions are purely financial, this type of agent is less exposed to operational risk and thus behaves rationally by maximizing expected profit without penalizing rare unfavorable outcomes. In practice, arbitragers may face administrative or transaction costs that make them unwilling to act on marginal arbitrage opportunities unless the expected profit exceeds a small thresholds, typically around \$1-\$2/MWh. As a result, real-world arbitrage activity may not perfectly equalize day-ahead energy prices and expected real-time prices. However, for the purpose of this model, we assume costless arbitrage. This simplification allows us to clearly identify the role of EIR and risk preference in impacting equilibrium outcomes. Moreover, since the transaction costs are small relative to the scale of market clearing prices and its movements induced by participants risk preference or market structure changes, this assumption does not materially affect the qualitative insights obtained from the equilibrium analysis.

\subsubsection{Market Clearing Conditions in EMO}
The DA energy price is determined by the
complementarity condition
\begin{equation}
\sum_i g^{DA}_i + a = d^{DA} + b  \perp \lambda^{DA} \mbox{ free} \label{DAenergyMC}
 \end{equation}
by ensuring total day-ahead energy supply and INCs equals total day-ahead bid-in demand and DECs. $\lambda^{DA}$ represents the marginal cost of supplying one more unit of energy in the day-ahead market. Because the condition is an equality, the associated price $\lambda^{DA}$  is unrestricted in sign.
The RT energy price is determined by the
complementarity condition
\begin{equation}
\sum_i g^{RT}_{i,s} = D^{RT}_s  \perp \lambda^{RT}_s \mbox{ free} \label{RTenergyMC}
\end{equation}
by ensuring total real-time energy production equals real-time materialized demand at each scenario. $\lambda^{RT}_s$ represents the marginal cost of supplying one more unit of energy in the real-time market under scenario $s$.
Note that transmission constraints are ignored in the market clearing conditions for the simplification purpose. As a result, all agents face the same energy prices. This assumption is not expected to affect the qualitative findings of the paper. In practice, the New England system is not frequently congested.     
\subsubsection{Full MCP for EMO} 
The full MCP for the EMO equilibrium is formed by appending the generators' optimality conditions to those of the arbitragers and demand agent, along with market clearing conditions requiring that supply equal demand in the real time and day-ahead markets. This is presented in full below:
\begin{align}
0 \le C_iq_{i,s} + \mu_{i,s} - \lambda^{RT}_sq_{i,s} + \gamma_{i,s} & \perp g^{RT}_{i,s} \geq 0 \label{RT1cvar}\\
0 \le C^{F}_i  - \sum_s\mu_{i,s} & \perp V_{i}^{DA} \ge 0 \label{DA1cvar} \\
0 \le C^{I}_{i,s}q_{i,s} - \mu_{i,s} & \perp V_{i,s}^{RT} \ge 0 \label{RTFcvar}\\
0\leq \mu_{i,s} - R_{i,s}^{RT}q_{i,s} &\perp W^{RT}_{i,s} \geq 0 \label{RT_fuel_sold_EMO}\\
0 \le \sum_s q_{i,s}\lambda^{RT}_{s} - \lambda^{DA} + \delta_i & \perp g^{DA}_i \ge 0 \label{DA2cvar}\\ 
\sum_s q_{i,s}  = 1 &\perp \eta_i \mbox{ free} \label{emocvar2} \\
   0 \le\frac{1}{\alpha_i}\pi_s - q_{i,s}& \perp u_{i,s} \ge 0 \label{emocvardem3}\\
0 \le Q_i - g^{RT}_{i,s} & \perp \gamma_{i,s} \ge 0 \label{RT2cvar} \\
g^{RT}_{i,s} = F_i + V_{i,s}^{RT}+ V_{i}^{DA} - W_{i,s}^{RT} & \perp  \mu_{i,s} \mbox{ free} \label{RT3cvar}\\
0 \le Q_i - g^{DA}_i & \perp \delta_i \ge 0 \label{DACapcvar}\\
  0 \le u_{i,s} - \eta_{i} + Z_{i,s} & \perp q_{i,s} \ge 0 \label{emocvar1}\\
   0 \le \lambda^{DA} - \sum_s q_{s}^D\lambda^{RT}_s & \perp d^{DA}\ge 0  \label{DDA1dem_EMO} \\
    0 \le u_{s}^D - \eta^D + Z_{s}^D & \perp q_{s}^D \ge 0\label{Demandcvar1}\\
   \sum_s q_{s}^D  = 1 &\perp \eta^D \mbox{ free} \label{Demandcvar2} \\
   0 \le \frac{1}{\alpha^D}\pi_s - q_{s}^D& \perp u_{s}^D \ge 0 \label{demandcvardem3}\\
0 \le E_s[\lambda^{RT}_s] - \lambda^{DA} & \perp a \ge 0 \label{ARBT1cvar} \\
0 \le \lambda^{DA} - E_s[\lambda^{RT}_s] & \perp b \ge 0 \label{ARBT2cvar} \\
\sum_i g^{DA}_i + a = d^{DA} + b & \perp \lambda^{DA} \mbox{ free} \label{MC1cvar}\\
\sum_i g^{RT}_{i,s} = D^{RT}_s  & \perp \lambda^{RT}_s \mbox{ free} \label{MC2cvar}
\end{align}

where $Z_{i,s}$ and $Z_s^D$ are defined by (\ref{Gen_CVARprofit_EMO}) and (\ref{dem_CVARprofit_EMO}), respectively.

Conditions (\ref{RT1cvar})-(\ref{emocvar1}) correspond to the KKT optimality conditions of generators' profit maximization problem.

Conditions (\ref{DA1cvar}) and (\ref{DA2cvar}) are originally formulated as nonlinear complementarity conditions, derived from the KKT conditions of the generators' profit maximization problem:

\begin{align}
0 \le C^{F}_i\sum_s q_{i,s}  - \sum_s \mu_{i,s}  &\perp V_{i}^{DA} \ge 0\\
0 \le \sum_s q_{i,s}\lambda^{RT}_{s} - \lambda^{DA}{\sum_s q_{i,s}}  + \delta_i &\perp g^{DA}_i \ge 0 \
\end{align}

Because of condition (\ref{emocvar2}), these nonlinear complementarity conditions can be simplified and equivalently expressed as 
conditions (\ref{DA1cvar}) and (\ref{DA2cvar}).

In addition, the dual variable, $q_{i,s}$, can be interpreted as the risk-adjusted probability that generator $i$ places on scenario $s$. When the generator is risk neutral, these probabilities are equal to $\pi_s$ due to condition (\ref{emocvardem3}).

Conditions (\ref{DDA1dem_EMO})-(\ref{demandcvardem3}) represent the KKT optimality conditions of demand agent's profit maximization problem. Condition (\ref{DDA1dem_EMO}) was initially expressed as a nonlinear complementarity condition, derived from the KKT conditions of the demand agent's profit maximization problem:

\begin{align}
 0 \le \lambda^{DA}{\sum_s q_{s}^D} - \sum_s q_{s}^D\lambda^{RT}_s & \perp d^{DA}\ge 0  
\end{align}

Given condition (\ref{Demandcvar2}), this nonlinear complementarity condition can be equivalently formulated in a simplified form as condition (\ref{DDA1dem_EMO}).

Conditions (\ref{ARBT1cvar}) and (\ref{ARBT2cvar}) are arbitragers profit maximization problem's KKT optimality conditions. Conditions (\ref{MC1cvar}) and (\ref{MC2cvar}) are market clearing conditions of EMO.

\subsection{Equilibrium Model for Market with Imbalance Reserve}
\subsubsection{Generators' Profit Maximization Problem in EMIR} 
 We similarly model the generator $i$'s optimization problem when EIR options are present.
\begin{subequations}
\begin{align}
\max_{\Xi} &\textcolor{white}{h} \eta_i- \frac{1}{a_i}\sum_s \pi_su_{i,s}\\
\mbox{s/t}& \textcolor{white}{h}u_{i,s} \ge \eta_i- Z_{i,s}&[q_{i,s}] \\
&g^{DA}_i + e_i \le Q_i &[\delta_i] \\
& g^{RT}_{i,s} \le Q_i &[\gamma_{i,s}]\\
& g^{RT}_{i,s} = F_i + V_{i}^{DA} + V^{RT}_{i,s} - W_{i,s}^{RT} &[\mu_{i,s}]\\
& g^{RT}_{i,s},g^{DA}_i,e_i,V^{DA}_i,V^{RT}_{i,s},W^{RT}_{i,s},u_{i,s} \ge 0 &\forall s 
\end{align}
\end{subequations}

where
\begin{align}
\Xi :=(g^{RT}_{i,s},g^{DA}_i,u_{i,s},V^{DA}_i,V^{RT}_{i,s},W^{RT}_{i,s},e_i,\eta_i) \nonumber\\
  Z_{i,s} := (\lambda^{DA}+\rho)g^{DA}_i +\lambda^{RT}_s(g^{RT}_{i,s}-g^{DA}_i)+\rho e_i \nonumber \\ 
  -[\lambda^{RT}_s - K]^{+}e_i - C_ig^{RT}_{i,s} - C^{F}_iV_{i}^{DA} - C^{I}_{i,s}V_{i,s}^{RT}+R_{i,s}W_{i,s}^{RT}. \label{Gen_CVARprofit_EMIR} 
\end{align}

 Generators offer in EIRs at a price and quantity, much like how day-ahead energy offers are made (EIR being a financial product, we assume the generator does not incur any additional costs when providing EIR). In the day-ahead EMIR, a generator earns the EIR clearing price, $\rho$, for each MWh of reserve they provided, represented by $e_i$. The EIRs are purchased by the system at a strike price of $K$, which is determined (in advance and announced) by the system operator. In real-time any seller of EIR options, is charged $[\lambda_s^{RT} -K]^+$ for each MWh of EIR they have sold in the day-ahead. Note that the symbol + indicates there will be no cost to the generator provided $K$ is more than the real-time price. This real-time charge on the price of energy is rather unique, as reserves are commonly settled using reserve prices only. Using $\lambda_s^{RT}$ in settling the EIR option implies that higher production in scenario 
$s$ results in lower closeout charges for reserves. In our risk-averse equilibrium models, this effect is manifested in the generator's behavior - specifically, their decision to invest in advance fuel as a hedge against potential closeout charges. See Section V.A for details. Under the EMIR mechanism, the day-ahead energy award $g^{DA}_i$ is not only compensated at the day-ahead energy clearing price $\lambda^{DA}$, but also at $\rho$ for its contribution to stratifying the FER as shown in the market clearing condition (\ref{FER_clearing}).  
\subsubsection{Demand's Profit Maximization Problem in EMIR} 
In this section, we provide a demand agent that represents the aggregate of all demand side market participants and pays for all energy and reserve. $d^{DA}$ is the sole decision variable of this agent, as we assume their real time loads are fixed. Demand agent's problem in the presence of EIR options follows.

\begin{subequations}
\begin{align}
&\underset{{\eta^D,u_s^D,d^{DA}}}{\mbox{max}} & \eta^D - \frac{1}{\alpha^D}\sum{\pi_s}u_{s}^D & \\
&\mbox{s/t}& u_{s}^D \ge \eta^D - Z_{s}^D&[q_{s}^D]& \\
&& u_{s}^D,d^{DA} \ge 0 
\end{align}
\end{subequations}

where
\begin{align}
 Z_{s}^D := -(\lambda^{DA})d^{DA} -\lambda^{RT}_s(D^{RT}_{s}-d^{DA}) \nonumber\\ 
  -\rho(FER)
+ [\lambda^{RT}_s - K]^{+}\sum_i e_i  \label{dem_CVARprofit_EMIR2} 
\end{align}

\subsubsection{Arbitragers' Profit Maximization Problem in EMIR} 
The arbitragers' profit maximization problem in EMIR is identical to the EMO counterpart described in subsection \ref{EMO_arbitragers_profit_section}.

\subsubsection{Market Clearing Conditions in EMIR}
The DA and RT energy price complementarity conditions (\ref{DAenergyMC}) and (\ref{RTenergyMC}) are needed in EMIR. In addition, price for EIR, $\rho$ is determined by the complementarity condition
\begin{align}
0\leq \sum_ig_i^{DA} + \sum_i e_i -FER  \perp \rho \geq 0 \label{FER_clearing}
\end{align}
which stipulates that the EIR clearing price is positive only when FER equals the DA energy awards and EIR awards.

\subsubsection{MCP for EMIR} 
The full MCP for the EMIR equilibrium is formed by the optimality conditions of the generator, demand, arbitragers, and market clearing conditions. 

\begin{align}
0 \le C_iq_{i,s} + \mu_{i,s} - \lambda^{RT}_sq_{i,s}  + \gamma_{i,s} & \perp g^{RT}_{i,s} \geq 0 \label{EMIR_RT1cvar}\\
0 \le C^{F}_i  - \sum_s \mu_{i,s} & \perp V_{i}^{DA} \ge 0 \label{EMIR_DA1cvar} \\
0 \le C^{I}_{i,s}q_{i,s}- \mu_{i,s} & \perp V_{i,s}^{RT}  \ge 0 \label{EMIR_RTFcvar}\\
0\leq \mu_{i,s} - R_{i,s}q_{i,s} &\perp W^{RT}_{i,s} \geq 0 \label{RT_fuel_sold_EMIR}\\
0 \le \sum_s q_{i,s}\lambda^{RT}_{s} - \lambda^{DA} -\rho + \delta_i & \perp g^{DA}_i \ge 0 \label{EMIR_DA2cvar}\\ 
  0\leq \sum_s q_{i,s}[\lambda^{RT}_s - K]^+ -\rho +\delta_i &\perp e_i \geq 0 \label{EMIR_EIRdualcvar}\\
  \sum_s q_{i,s}  = 1 &\perp \eta_i \mbox{ free} \label{EMIR_emocvar2} \\
     0 \le \frac{1}{\alpha_i}\pi_s - q_{i,s}& \perp u_{i,s} \ge 0 \label{EMIR_emocvardem3}\\
0 \le Q_i - g^{RT}_{i,s} & \perp \gamma_{i,s} \ge 0 \label{EMIR_RT2cvar} \\
g^{RT}_{i,s} = F_i + V_{i,s}^{RT}+ V_{i}^{DA} - W_{i,s}^{RT} & \perp  \mu_{i,s} \mbox{ free} \label{EMIR_RT3cvar}\\
0 \le Q_i - g^{DA}_i -e_i& \perp \delta_i \ge 0 \label{EMIR_DACapcvar}\\
  0 \le u_{i,s} - \eta_{i} + Z_{i,s} & \perp q_{i,s} \ge 0 \label{EMIR_emocvar1}\\
     0 \le \lambda^{DA} - \sum_s q_{s}^D\lambda^{RT}_s & \perp d^{DA}\ge 0 \label{EMIR_DDA1dem_alterantive} \\
   0 \le u_{s}^D - \eta^D + Z_{s}^D & \perp q_{s}^D \ge 0\label{EMIR_Demandcvar1}\\
   \sum_s q_{s}^D  = 1 &\perp \eta^D \mbox{ free} \label{EMIR_Demandcvar2} \\
   0 \le \frac{1}{\alpha^D}\pi_s - q_{s}^D& \perp u_{s}^D \ge 0 \label{EMIR_demandcvardem3}\\
0 \le E_s[\lambda^{RT}_s] - \lambda^{DA} & \perp a \ge 0 \label{EMIR_ARBT1cvar} \\
0 \le \lambda^{DA} - E_s[\lambda^{RT}_s] & \perp b \ge 0 \label{EMIR_ARBT2cvar} \\
\sum_i g^{DA}_i + a = d^{DA} + b & \perp \lambda^{DA} \mbox{ free} \label{EMIR_MC1cvar}\\
\sum_i g^{RT}_{i,s} = D^{RT}_s  & \perp \lambda^{RT}_s \mbox{ free} \label{EMIR_MC2cvar}\\
0\leq \sum_ig_i^{DA} + \sum_i e_i -FER  &\perp \rho \geq 0 \label{EIR_MCFER}
\end{align}
where $Z_{i,s}$ and $Z^D_s$ are defined by (\ref{Gen_CVARprofit_EMIR}) and (\ref{dem_CVARprofit_EMIR2}).

\section{Theoretical Results of the Risk Neutral Case}
In the following section, we state theoretical findings for the risk neutral mixed complementarity models. These models are special cases of the risk models when $\alpha_i = 1$, which represents agents maximizing their expected profits. 
Propositions \ref{prop:1} and \ref{prop:2}, Corollary \ref{cor:1}, as well as Lemmas \ref{lemma:AdvFuel1} and 
\ref{lemma:AdvFuel2} refer to the risk neutral model with a demand agent, shown in the Appendix A1 and A2. 


\begin{proposition}
\label{prop:1}
   Suppose that total available capacity is greater than $FER$ (i.e. $\sum_i Q_i > FER$). Then any valid solution to the model with energy imbalance reserve and virtual traders, satisfies $\rho = 0$. Furthermore, when the close out cost is not identically zero across all scenarios, $e_i =0$ in equilibrium.
\end{proposition}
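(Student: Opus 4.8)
The plan is to exploit three structural facts that hold at any risk-neutral equilibrium of the EMIR MCP and then argue by contradiction for the first claim. First I would pin down the risk-adjusted probabilities: setting $\alpha_i = 1$ in (\ref{EMIR_emocvardem3}) gives $q_{i,s} \le \pi_s$ for every $s$, while (\ref{EMIR_emocvar2}) forces $\sum_s q_{i,s} = 1 = \sum_s \pi_s$; since each $\pi_s - q_{i,s} \ge 0$ and these terms sum to zero, every term vanishes and $q_{i,s} = \pi_s$. Second, the two arbitrager conditions (\ref{EMIR_ARBT1cvar}) and (\ref{EMIR_ARBT2cvar}) simultaneously require $E_s[\lambda^{RT}_s] \ge \lambda^{DA}$ and $\lambda^{DA} \ge E_s[\lambda^{RT}_s]$, so virtual trading equalizes the prices, $\lambda^{DA} = E_s[\lambda^{RT}_s] = \sum_s \pi_s \lambda^{RT}_s$. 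Combining these two facts, the day-ahead energy term in (\ref{EMIR_DA2cvar}) collapses: $\sum_s q_{i,s}\lambda^{RT}_s = \sum_s \pi_s \lambda^{RT}_s = \lambda^{DA}$ for every generator $i$.

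To establish $\rho = 0$, I would suppose instead that $\rho > 0$. Then complementarity in the FER clearing condition (\ref{EIR_MCFER}) forces the constraint to bind, $\sum_i (g^{DA}_i + e_i) = FER$. Under the hypothesis $\sum_i Q_i > FER$ this gives $\sum_i (Q_i - g^{DA}_i - e_i) = \sum_i Q_i - FER > 0$; since each summand is nonnegative by (\ref{EMIR_DACapcvar}), at least one generator $j$ must carry strict slack, $Q_j - g^{DA}_j - e_j > 0$, and complementarity then yields $\delta_j = 0$. Feeding $q_{j,s} = \pi_s$, $\sum_s \pi_s\lambda^{RT}_s = \lambda^{DA}$, and $\delta_j = 0$ into the sign restriction embedded in (\ref{EMIR_DA2cvar}) produces $0 \le \lambda^{DA} - \lambda^{DA} - \rho + 0 = -\rho$, i.e. $\rho \le 0$, contradicting $\rho > 0$. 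Hence $\rho = 0$.

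For the second claim I would set $\rho = 0$ in the EIR offer condition (\ref{EMIR_EIRdualcvar}) and use $q_{i,s} = \pi_s$ to obtain the sign restriction $\sum_s \pi_s [\lambda^{RT}_s - K]^+ + \delta_i \ge 0$ paired with $e_i \ge 0$. The hypothesis that the close-out cost $[\lambda^{RT}_s - K]^+$ is not identically zero across scenarios means $\lambda^{RT}_{s_0} > K$ for some $s_0$ with $\pi_{s_0} > 0$, so $\sum_s \pi_s[\lambda^{RT}_s - K]^+ > 0$; adding the nonnegative $\delta_i$ keeps the expression strictly positive, and complementarity then forces $e_i = 0$ for every $i$.

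The only delicate point is the existence of a slack generator: it hinges on converting the strict capacity surplus $\sum_i Q_i > FER$ into a binding-FER contradiction, which is precisely why the argument must be run by contradiction on $\rho$ rather than derived directly (without assuming $\rho>0$ one only knows $FER \le \sum_i(g^{DA}_i+e_i)\le \sum_i Q_i$, which need not leave slack). I would also flag the mild standing assumption $\pi_s > 0$ used in the final step; everything else reduces to routine substitution into the complementarity conditions.
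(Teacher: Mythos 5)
Your proof is correct and follows essentially the same route as the paper's: arbitrage forces $\lambda^{DA} = E_s[\lambda^{RT}_s]$, the capacity surplus $\sum_i Q_i > FER$ yields a generator with slack and hence $\delta = 0$, the day-ahead offer condition then gives $\rho \le 0$ (hence $\rho = 0$), and a strictly positive expected close-out cost plus complementarity kills $e_i$. The only cosmetic differences are that you argue by contradiction on $\rho > 0$ where the paper does a direct case split on whether the FER constraint has slack, and that you work in the CVAR system at $\alpha_i = 1$ (first showing $q_{i,s} = \pi_s$) rather than in the paper's risk-neutral appendix model where $\pi_s$ appears directly.
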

\begin{proof}
    Observe from (\ref{eirARBT1dem}) and (\ref{eirARBT2dem}) that the presence of arbitragers results in $\lambda^{DA} =  E_s[\lambda^{RT}_s]$. Furthermore, since $\sum_i Q_i > FER$, either $\sum_ig_i^{DA} + \sum_i e_i > FER$ in which case $\rho =0$ and we are done, or it must be that for some $\hat{i}$, 
    $g_{\hat{i}}^{DA} + e_{\hat{i}} < Q_{\hat{i}}$. Therefore, from (\ref{eirDA2dem}), we must have $\delta_{\hat{i}} = 0$. 
    Now, (\ref{eirDA1dem}) (together with the arbitrage condition) implies that $\rho \leq \delta_{i}$ for any $i$, and in particular, considering $\hat{i}$, this yields $\rho \leq 0$. Together with the right-hand-side of (\ref{eirMC3dem}) this proves the result. 

    We make a further observation that if $E([\lambda_s^{RT}-K]^+) >0$, that is, if the close out cost of the option is not identically zero in {\em{all}} scenarios, $e_i =0$. This is follows directly from (\ref{eirEIRdem}) together with $\delta_i \geq 0$ from (\ref{eirDA2dem}) and the fact that $\rho$ must be zero as proved above. 
\end{proof}

\begin{cor}
\label{cor:1}
    Under the same condition as Proposition \ref{prop:1}, any equilibrium of the market with EIRs can be transformed to an equilibrium of the market without EIR options where the day-ahead fuel purchases, spot fuel purchases, and the total day-ahead procurement remains the same. 
\end{cor}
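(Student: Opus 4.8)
The plan is to exhibit an explicit transformation that maps any EMIR equilibrium satisfying the hypotheses to an EMO equilibrium leaving the stated quantities unchanged; the argument is carried out on the risk-neutral appendix model, but its structure mirrors the general conditions (\ref{EMIR_RT1cvar})--(\ref{EIR_MCFER}), specialized by $\alpha_i = 1$ so that $q_{i,s} = \pi_s$. First I would invoke Proposition \ref{prop:1}, which under $\sum_i Q_i > FER$ and a closeout cost that is not identically zero guarantees that every valid EMIR solution has $\rho = 0$ and $e_i = 0$ for all $i$. The candidate transformation is then the restriction map: keep every variable shared by the two models---$g^{RT}_{i,s}$, $g^{DA}_i$, $V^{DA}_i$, $V^{RT}_{i,s}$, $W^{RT}_{i,s}$, the CVaR variables, all multipliers, $d^{DA}$, $a$, $b$, $\lambda^{DA}$ and $\lambda^{RT}_s$---at their EMIR values, and discard the two EIR-specific objects $e_i$ and $\rho$, both of which vanish.

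The core of the argument is to verify that this restricted point satisfies every complementarity condition of the EMO MCP, which I would organize around the observation that the two systems differ only through terms carrying $\rho$ or $e_i$. Substituting $\rho = 0$ and $e_i = 0$ into the generator profit (\ref{Gen_CVARprofit_EMIR}) collapses it term-by-term onto the EMO profit (\ref{Gen_CVARprofit_EMO}), and likewise the demand profit (\ref{dem_CVARprofit_EMIR2}) collapses onto (\ref{dem_CVARprofit_EMO}); hence the CVaR-defining inequalities and their complementary slackness transfer verbatim. The stationarity condition for $g^{DA}_i$, equation (\ref{EMIR_DA2cvar}), loses its $-\rho$ term and becomes exactly (\ref{DA2cvar}), while the joint capacity condition (\ref{EMIR_DACapcvar}) loses its $-e_i$ term and becomes exactly the EMO capacity condition (\ref{DACapcvar}) with the same multiplier $\delta_i$. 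Every remaining EMIR condition already coincides with its EMO counterpart, and the two conditions with no EMO analogue---the EIR stationarity (\ref{EMIR_EIRdualcvar}) and the FER clearing condition (\ref{EIR_MCFER})---simply drop out once $e_i$ and $\rho$ leave the variable set. This shows the restricted point solves the EMO MCP.

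Finally, since the transformation is the identity on $V^{DA}_i$, $V^{RT}_{i,s}$ and $g^{DA}_i$, the day-ahead fuel purchases, spot fuel purchases, and the total day-ahead procurement $\sum_i g^{DA}_i$ (which, because $e_i = 0$, already equals the combined EIR-plus-energy procurement $\sum_i g^{DA}_i + \sum_i e_i$) are preserved by construction. I expect the only delicate step to be the capacity condition (\ref{EMIR_DACapcvar}): it is exactly here that EMIR couples $g^{DA}_i$ and $e_i$ in a single constraint, so a nonzero $e_i$ would change the feasible slack and hence possibly the value of $\delta_i$, breaking the match with (\ref{DACapcvar}). The conclusion $e_i = 0$ from Proposition \ref{prop:1}---which is precisely where the hypothesis that the closeout cost is not identically zero enters---is what makes this coupling vanish and keeps the transformation an honest identity on the shared variables.
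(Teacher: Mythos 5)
Your proof establishes only a special case of the corollary, and the restriction to that case is exactly where it diverges from the paper. The hypothesis of Corollary \ref{cor:1} is ``the same condition as Proposition \ref{prop:1}'', i.e.\ only the capacity condition $\sum_i Q_i > FER$. Under that condition alone, Proposition \ref{prop:1} delivers $\rho = 0$ but \emph{not} $e_i = 0$: the conclusion $e_i=0$ needs the additional hypothesis that the closeout cost is not identically zero across scenarios, which you import as if it were part of the standing assumptions. When that extra hypothesis fails---i.e.\ when the strike price is high enough that $[\lambda_s^{RT}-K]^+ = 0$ in every scenario---EMIR equilibria with strictly positive EIR awards exist; the paper's own numerics exhibit one (Table \ref{tab:kvarysinglegen}, where $K=\$15$ clears $e_1 = 93.02$, and Figure \ref{kvsdaeir}, where FER is met mostly by EIR at large $K$). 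Your identity/restriction map has nothing to say about these equilibria: as you yourself note, ``total day-ahead procurement'' means $\sum_i g_i^{DA} + \sum_i e_i$, so discarding a positive $e_i$ fails to preserve it, and your argument simply does not cover that case.

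The paper's proof avoids this entirely by never invoking $e_i = 0$. Instead of the identity map it uses the transformation $g_i^{DA} = \overline{g_i^{DA}} + \overline{e_i}$, $b = \overline{b} + \sum_i \overline{e_i}$, with $V_i^{DA}$, all real-time variables, $\lambda^{DA}$, $d^{DA}$ and $\delta_i$ held fixed: each EIR award is folded into a day-ahead energy sale absorbed by additional virtual demand, so the day-ahead clearing condition (\ref{MC1dem}) still balances and total day-ahead procurement is preserved by construction. The verification then goes through because $\rho = 0$ makes (\ref{eirDA1dem}) collapse to (\ref{DA2dem}), the capacity slack in the EMO condition (\ref{DACapdem}) evaluated at the new $g_i^{DA}$ is identical to the slack of (\ref{eirDA2dem}), and whenever $\overline{e_i}>0$ condition (\ref{eirEIRdem}) with $\rho=0$ forces $\overline{\delta_i}=0$, so the reused multiplier stays complementary. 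To repair your write-up you must either add the closeout-cost hypothesis to the statement (proving a strictly weaker corollary than the paper's), or handle the $e_i>0$ case via this fold-in transformation.
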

\begin{proof}
Note that in the MCP systems, conditions (\ref{RT1dem}--\ref{DA2dem}) are identical to (\ref{eirRT1dem}--\ref{eirDA2dem}), conditions (\ref{ARBT1dem}--\ref{MC2dem}) are identical to (\ref{eirARBT1dem}--\ref{eirMC2dem}), and lastly (\ref{DDA1dem}) is identical to (\ref{eirDDA1dem}). Any difference in the equilibrium systems of the two market schemes is attributed to the existence of $e_i$ and $\rho$. Now, let us denote the solutions to an equilibrium of the market with EIRs by overline. Given a solution to the market with EIRs, construct a solution to the market without EIRs as follows:
\begin{itemize}
    \item Keep all real time values the same, e.g. $g_{i,s}^{RT} = \overline{g_{i,s}^{RT}}$, $ V_{i,s}^{RT} = \overline{V_{i,s}^{RT}}$, $\lambda_s^{RT} = \overline{\lambda_s^{RT}}$, etc. 
    \item Let $V_i^{DA} = \overline{V_i^{DA}}$.
    \item For the remaining decision variables:
\begin{align*}
g_i^{DA} & = e_i + \overline{g_i^{DA}}, \\
b & = \sum_i e_i + \overline{b},\\
\lambda^{DA} & = \overline{\lambda^{DA}}, \\
\delta_i & = \overline{\delta_i}.
\end{align*}
\end{itemize}
It is easy to see the above provides an equilibrium solution satisfying (\ref{RT1dem}--\ref{DDA1dem}). This completes the proof. 
\end{proof}

{\bf{Remark}}: {\textcolor{black}{If $g_i^{DA}$ in the equilibrium solution of EMO satisfy $\sum_i g_i^{DA} \geq FER$, we can construct a solution for the market with EIRs by the following simple transformation: }}
\begin{align*}
\overline{g_{i,s}^{RT}} & = g_{i,s}^{RT},\\ 
\overline{\lambda_s^{RT}} & = \lambda_s^{RT}\\
\overline{g_i^{DA}} & = g_i^{DA}, \\
\overline{\lambda^{DA}} & = \lambda^{DA}, \\
\rho & = 0,\\
e_i & = 0.
\end{align*}
{\textcolor{black}{However, if $\sum_i g_i^{DA} < FER$ the natural transformation above will not render an equilibrium for EMIR. Note that any equilibrium solution to the EMIR will guarantee that $\sum_i g_i^{DA} \geq FER$ by its nature, as per Proposition \ref{prop:1}.}}

\begin{proposition}
\label{prop:2}
    In each of the EMIR and EMO markets described above, as long as $\lambda_s^{RT}$ are unique, we obtain unique equilibrium prices. The day-ahead and energy imbalance reserve quantities for an equilibrium solution are however not unique.
\end{proposition}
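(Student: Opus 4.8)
The plan is to split the claim into two parts: uniqueness of equilibrium prices (under the hypothesis that the real-time prices $\lambda_s^{RT}$ are unique) and non-uniqueness of the day-ahead energy and EIR quantities. I would treat the two markets in parallel, since the arbitrage conditions and market-clearing conditions that drive the price argument are structurally identical. First I would establish the price claim. Taking the uniqueness of $\lambda_s^{RT}$ as given, the arbitrage complementarity conditions (\ref{ARBT1cvar})--(\ref{ARBT2cvar}) (respectively (\ref{EMIR_ARBT1cvar})--(\ref{EMIR_ARBT2cvar})) force $\lambda^{DA} = E_s[\lambda_s^{RT}]$ whenever arbitrage is active, so the day-ahead price is pinned down as the probability-weighted average of the unique real-time prices. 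For the EMIR market I would additionally argue that $\rho$ is uniquely determined: invoking Proposition \ref{prop:1}, under $\sum_i Q_i > FER$ we already have $\rho = 0$; more generally I would show that the binding pattern of (\ref{EMIR_DA2cvar}) and (\ref{EMIR_EIRdualcvar}) together with the fixed $\lambda^{DA}$ and $\lambda_s^{RT}$ determines $\rho$ uniquely through the marginal generator's optimality condition.

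Next I would address the non-uniqueness of quantities, which is the more interesting half. The natural approach is to exhibit an explicit one-parameter family of equilibria that all share the same prices, mirroring the transformation already used in Corollary \ref{cor:1}. Concretely, since arbitrageurs are indifferent between any volume of INCs and DECs once $\lambda^{DA} = E_s[\lambda_s^{RT}]$, I would shift a quantity between a generator's day-ahead award $g_i^{DA}$ and the arbitrageur's DEC volume $b$ (or between $g_i^{DA}$ and $e_i$ in EMIR), showing that the shifted profile still satisfies every complementarity condition and both market-clearing equations. Because the real-time decisions, fuel purchases, and all prices are held fixed, each generator's per-scenario profit $Z_{i,s}$ is invariant under such a shift (the $\lambda^{DA} g_i^{DA}$ and $-\lambda_s^{RT} g_i^{DA}$ terms cancel in expectation precisely because $\lambda^{DA} = E_s[\lambda_s^{RT}]$), so optimality is preserved. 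Producing one such valid perturbation with a free parameter suffices to prove non-uniqueness.

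The main obstacle I anticipate is making the price-uniqueness argument fully rigorous in the degenerate cases rather than the generic one. The arbitrage conditions only pin down $\lambda^{DA}$ when $a$ or $b$ is strictly positive; if both virtual quantities are zero at equilibrium, the complementarity conditions (\ref{ARBT1cvar})--(\ref{ARBT2cvar}) merely sandwich $\lambda^{DA}$ between the two inequalities, and I must check that these still force $\lambda^{DA} = E_s[\lambda_s^{RT}]$ (they do, since the two inequalities read $\lambda^{DA} \le E_s[\lambda_s^{RT}]$ and $\lambda^{DA} \ge E_s[\lambda_s^{RT}]$ simultaneously). A related subtlety is the EMIR case: I need to confirm that $\rho$ is uniquely determined even when the capacity constraint is or is not binding, and to verify that the hypothesis ``$\lambda_s^{RT}$ unique'' is a genuine standing assumption rather than something I must derive — the statement explicitly conditions on it, so I would treat the real-time market clearing and the generators' real-time optimality conditions (\ref{RT1cvar}), (\ref{RT3cvar}) as delivering the unique $\lambda_s^{RT}$ as input. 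Once prices are fixed, the quantity-level degeneracy is essentially linear and straightforward, so I expect no difficulty in the second half beyond bookkeeping.
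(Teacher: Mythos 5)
Your proposal is correct and follows essentially the same route as the paper's proof: the arbitrage complementarity conditions force $\lambda^{DA} = E_s[\lambda_s^{RT}]$ (hence unique day-ahead prices once the real-time prices are unique), and the non-uniqueness of quantities comes from the continuum of $g_i^{DA}$ and $b$ values compatible with the day-ahead market-clearing equation, exactly the degeneracy the paper points to. Your additional care with the degenerate case $a=b=0$ and with pinning down $\rho$ in EMIR (via Proposition \ref{prop:1}) goes beyond the paper's terse argument, which does not discuss $\rho$ at all, but it does not change the substance of the approach.
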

\begin{proof}
    Note that the arbitrage conditions enforce $\lambda^{DA} = E_s[\lambda_s^{RT}]$. As long as we have unique real time prices, all equilibria will exhibit uniqueness of day-ahead (and clearly real time) prices. However, the non-uniqueness of the equilibria corresponds to the (infinite) continuum of eligible $b$ and $g_i^{DA}$ that satisfy $\sum_i g_i^{DA} + a = d^{DA} + b$. 
\end{proof}

\begin{lemma}
\label{lemma:AdvFuel1}
    If the expected intraday fuel cost is less than the advanced fuel cost for some generator $i$, i.e. $\sum_s C_{i,s}^I \pi_s < C_i^F$, then $V_i^{DA} = 0$ under both EMO and EMIR market structures, in equilibrium. 
\end{lemma}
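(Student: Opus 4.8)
The plan is to argue by contradiction, exploiting the complementarity structure of the two fuel-purchase conditions together with the collapse of the risk weights in the risk-neutral regime. First I would record that, because the lemma concerns the risk-neutral model ($\alpha_i = 1$), the risk-adjusted probabilities coincide with the physical ones. Indeed, condition (\ref{emocvardem3}) gives $q_{i,s} \le \pi_s$ for every $s$, while (\ref{emocvar2}) forces $\sum_s q_{i,s} = 1 = \sum_s \pi_s$; a nonnegative vector dominated componentwise by $\pi$ but carrying the same total mass must equal $\pi$, so $q_{i,s} = \pi_s$. The identical reasoning applies verbatim in EMIR through (\ref{EMIR_emocvardem3}) and (\ref{EMIR_emocvar2}).

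Next, suppose toward a contradiction that $V_i^{DA} > 0$ at an equilibrium. Complementarity in the advanced-fuel condition (\ref{DA1cvar}) then forces the inequality to bind, giving $\sum_s \mu_{i,s} = C_i^F$. The multiplier $\mu_{i,s}$ is the shadow price of the single per-scenario fuel-balance constraint (\ref{RT3cvar}), and is therefore the same object appearing in the spot-fuel condition (\ref{RTFcvar}); that condition supplies the upper bound $\mu_{i,s} \le C^I_{i,s} q_{i,s}$. Summing over $s$ and substituting $q_{i,s} = \pi_s$ yields $C_i^F = \sum_s \mu_{i,s} \le \sum_s C^I_{i,s}\pi_s$, contradicting the hypothesis $\sum_s C^I_{i,s}\pi_s < C_i^F$. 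Hence $V_i^{DA} = 0$. Note that the fuel-resale condition (\ref{RT_fuel_sold_EMO}) only yields a lower bound on $\mu_{i,s}$ and is not needed here.

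For EMIR the argument is identical, since the two conditions carrying the whole proof, (\ref{EMIR_DA1cvar}) and (\ref{EMIR_RTFcvar}), are literally the EMO conditions (\ref{DA1cvar}) and (\ref{RTFcvar}); the reserve price $\rho$ and award $e_i$ enter only the day-ahead energy and EIR stationarity conditions, never the fuel-purchase conditions, so they do not appear in this chain. I do not expect a serious technical obstacle, as the computation is short. The one step that genuinely carries the argument, and which I would flag explicitly, is the identification $q_{i,s} = \pi_s$: in the risk-averse case the best available bound is only $\sum_s \mu_{i,s} \le \sum_s C^I_{i,s} q_{i,s}$, and since the tail weighting $q_{i,s}$ can load probability onto the high-cost scenarios, no contradiction with $\sum_s C^I_{i,s}\pi_s < C_i^F$ need arise. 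This is precisely why the hypothesis is phrased in terms of $\pi_s$ and the conclusion is confined to the risk-neutral model.
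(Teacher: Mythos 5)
Your proposal is correct and follows essentially the same argument as the paper: assume $V_i^{DA}>0$, use complementarity in the advanced-fuel condition to get $C_i^F=\sum_s\mu_{i,s}$, bound each $\mu_{i,s}$ above via the spot-fuel condition, and sum to contradict the hypothesis, with the EMIR case handled identically since those two conditions are unchanged. The only cosmetic difference is that you work from the general CVAR system and explicitly collapse $q_{i,s}$ to $\pi_s$, whereas the paper cites the risk-neutral MCP in the Appendix (conditions (\ref{DA1dem}) and (\ref{RTFdem})) where $\pi_s$ already appears.
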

\begin{proof}
    Assume on the contrary that $V_i^{DA} >0$. Then from (\ref{DA1dem}) (or equivalently from 
    (\ref{eirDAFdem}) in the EMIR model), we obtain that $C_i^F = \sum_s \mu_{i,s}$. On the other hand (\ref{RTFdem}) (or equivalently ( \ref{eirRTFdem}) in the EMIR model), implies that $\sum_s C_{i,s}^I \pi_s \geq \sum_s \mu_{i,s}$. Putting the above together, we obtain $\sum_s C_{i,s}^I \pi_s \geq C_i^F$, which contradicts the assumption. Hence we have proved $V_i^{DA} = 0$.
\end{proof}

The next lemma highlights that unless the EMIR mechanism results in ``high enough'' real-time energy prices $\lambda_{\hat{s}}^{RT}$, for the scenarios $\hat{s}$ in which a high marginal cost of production generator $i$ is dispatched, then no additional incentives for advance fuel procurement is provided. 
\begin{lemma}
\label{lemma:AdvFuel2}
    Let generator $i$ have a marginal cost of production such that it is dispatched in scenarios 
    $\hat{s} \in \hat{S}$ but not in scenarios ${s'} \in S'$ where $S = \hat{S} \cup {S'}$. If $\sum_s C_{i,s}^I \pi_s \geq C_i^F > \sum_{s'} R_{i, {s'}} \pi_{s'} + \sum_{\hat{s}} (-C_i + \lambda_{\hat{s}}^{RT}) \pi_{\hat{s}}$ then $V_i^{DA} = 0$.
\end{lemma}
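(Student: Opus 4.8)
The plan is to argue by contradiction, exactly in the spirit of the proof of Lemma \ref{lemma:AdvFuel1}. Suppose $V_i^{DA} > 0$. Then complementary slackness in the advance-fuel condition (\ref{DA1dem}) (and its EMIR analogue (\ref{eirDAFdem})) forces the equality $C_i^F = \sum_s \mu_{i,s}$. The strategy is to bound each $\mu_{i,s}$ from above, scenario by scenario, according to whether generator $i$ is dispatched, and then to sum these bounds to contradict the strict inequality $C_i^F > \sum_{s'} R_{i,s'}\pi_{s'} + \sum_{\hat s}(-C_i + \lambda_{\hat s}^{RT})\pi_{\hat s}$.

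First I would treat the dispatched scenarios $\hat s \in \hat S$, where $g_{i,\hat s}^{RT} > 0$. Complementary slackness in the real-time production condition (the risk-neutral specialization of (\ref{RT1cvar}), with $q_{i,s} = \pi_s$) holds with equality, giving $\mu_{i,\hat s} = (\lambda_{\hat s}^{RT} - C_i)\pi_{\hat s} - \gamma_{i,\hat s}$. Since $\gamma_{i,\hat s} \ge 0$, this yields the upper bound $\mu_{i,\hat s} \le (\lambda_{\hat s}^{RT} - C_i)\pi_{\hat s}$, which reproduces the dispatched-scenario term on the right-hand side of the hypothesis.

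Next, for the non-dispatched scenarios $s' \in S'$, where $g_{i,s'}^{RT} = 0$, the key observation is that the advance fuel must be carried through and resold. From the fuel-balance constraint $g_{i,s'}^{RT} = F_i + V_i^{DA} + V_{i,s'}^{RT} - W_{i,s'}^{RT}$ with $g_{i,s'}^{RT} = 0$, $F_i \ge 0$, and the standing assumption $V_i^{DA} > 0$, we obtain $W_{i,s'}^{RT} = F_i + V_i^{DA} + V_{i,s'}^{RT} > 0$. Complementary slackness in the fuel-resale condition (the risk-neutral version of (\ref{RT_fuel_sold_EMO})) then pins down $\mu_{i,s'} = R_{i,s'}\pi_{s'}$ exactly, supplying the resale term. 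Summing the two cases gives $C_i^F = \sum_s \mu_{i,s} \le \sum_{\hat s}(\lambda_{\hat s}^{RT} - C_i)\pi_{\hat s} + \sum_{s'} R_{i,s'}\pi_{s'}$, which contradicts the hypothesis; hence $V_i^{DA} = 0$. Because the production, advance-fuel, and resale complementarity conditions are structurally identical in EMO and EMIR (the EIR terms enter only the $g_i^{DA}$ and $e_i$ conditions), the same argument establishes the claim under both market structures.

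I expect the main obstacle to be the bookkeeping of which complementarity conditions are active in each scenario type. In particular, the crux is recognizing that a strictly positive advance purchase forces $W_{i,s'}^{RT} > 0$ in \emph{every} non-dispatched scenario, which is precisely what converts the resale inequality into the exact equality $\mu_{i,s'} = R_{i,s'}\pi_{s'}$, rather than a mere bound. I would also note that only the strict (second) inequality in the hypothesis is actually used in the argument; the first inequality $\sum_s C_{i,s}^I \pi_s \ge C_i^F$ merely situates the lemma in the regime complementary to Lemma \ref{lemma:AdvFuel1}, where advance fuel would otherwise appear attractive.
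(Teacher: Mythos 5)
Your proposal is correct and follows essentially the same argument as the paper's proof: contradiction from $V_i^{DA}>0$ forcing $C_i^F=\sum_s \mu_{i,s}$, the bound $\mu_{i,\hat s}\le(\lambda_{\hat s}^{RT}-C_i)\pi_{\hat s}$ from the dispatch condition, and the equality $\mu_{i,s'}=R_{i,s'}\pi_{s'}$ from $W_{i,s'}^{RT}>0$ in non-dispatched scenarios. Your closing remark is also accurate: the paper's proof likewise uses only the strict second inequality of the hypothesis, with the first serving only to delimit the regime complementary to Lemma \ref{lemma:AdvFuel1}.
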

\begin{proof}
    Assume on the contrary that $V_i^{DA} >0$. Then from from (\ref{DA1dem}) (or equivalently from 
    \ref{eirDAFdem} in the EMIR model), we obtain that $C_i^F = \sum_s \mu_{i,s}$. For $\hat{s} \in \hat{S}$, we have that $g_{i, \hat{s}}^{RT} >0$, hence from (\ref{RT1dem}) (or equivalently from (\ref{eirRT1dem}) for the EMIR model), we have that 
    $$\mu_{i, \hat{s}} = (\lambda_{\hat{s}}^{RT} - C_i) \pi_{\hat{s}} - \gamma_{i, \hat{s}} \leq (\lambda_{\hat{s}}^{RT} - C_i) \pi_{\hat{s}}. $$
    For $s' \in S'$, we know that $V_i^{DA} >0$, $V_{i, \hat{s}}^{RT} \geq 0$, $F_i \geq 0$, and 
    $g_{i, s'}^{RT} =0$, hence it must be that $W_{i, s'}^{RT} >0$. Therefore (\ref{wconstrdem}) (or equivalently (\ref{eirwconstrdem}) for the EMIR case) implies that 
    $$\mu_{i, s'} = R_{i, s'} \pi_{s'}.$$
    Putting the above together, we obtain 
    $$
    \begin{array}{cl}
    \sum_s \mu_{i,s} & = \sum_{s'} \mu_{i,s'} + \sum_{\hat{s}} \mu_{i,\hat{s}} \\
                     & \leq \sum_{s'} R_{i, s'} \pi_{s'} + \sum_{\hat{s}} (-C_i + \lambda_{\hat{s}}^{RT}) \pi_{\hat{s}},
    \end{array}
    $$
    or equivalently,
    $$C_i^F\leq \sum_{s'} R_{i, s'} \pi_{s'} + \sum_{\hat{s}} (-C_i + \lambda_{\hat{s}}^{RT}) \pi_{\hat{s}}.$$
    This contradicts the assumption, hence we have proved the lemma.
\end{proof}

 Before we proceed to the next sections, we make the observation that in the risk neutral equilibrium models, the real time price in scenario $s$ is given by the marginal cost of production of the marginal generator, plus the fuel investment cost if that generator has acquired any fuel. To this end, observe from equation (\ref{RT2dem}) that $\gamma_{i(s), s} = 0$ in scenario $s$ for the marginal generator 
$i(s)$ in that scenario. Furthermore, if $V_{i(s), s}^{RT} > 0$, from equation (\ref{RTFdem}), we must have 
$\mu_{(i(s), s)} = C_{i(s), s}^I \pi_s$. These together with equation (\ref{RT1dem}) yield that 
$\lambda_s^{RT} = C_{i(s)} + C_{i(s), s}^I$. The same result holds true for the EMIR model as the real time conditions are identical.

\section{Observations and Numerical Examples}
This section presents numerical simulation results that highlight the differences between the EMO and EMIR models. We begin with a simple case involving a single generator and two real-time scenarios to illustrate the underlying reasons for the models’ distinct outcomes. We then extend the analysis to a more complex example with two generators and five real-time scenarios.
\subsection{Single Generator Example}
Consider a single generator, two equally likely real time scenarios, FER = 90 MWh with all remaining input parameters displayed in Table \ref{tab:onegeninputs}. To isolate and understand the fundamental effect of EMIR mechanism in this stylized example, the demand side agent does not participate in the day-ahead market. It does not schedule demand in advance. Instead, the load materializes only in real time. This setup allows us to focus solely on how EMIR influences the generator's incentives for advanced fuel procurement and energy production in the presence of uncertainty.

\begin{table}
\caption{One Generator Example Inputs}
    \label{tab:onegeninputs}
    \centering
    \begin{tabular}{|c|c|c|c|c|l|} \hline 
         $C_1$& $C^I_{1,1}$ & $C^I_{1,2}$ &  $C^{F}_1$& $D^{RT}_1$ &$D^{RT}_2$\\ \hline 
         0&  10&  15&  13&  75&125\\ \hline
    \end{tabular}
    
\end{table}

Table \ref{tab:singlegenemo} shows the outputs of generator's decision variables (fuel investment, day-ahead and EIR awards) in the EMIR model, as well as its profits and risk-adjusted probabilities when FER= 90 MWh and K = \$12. The risk neutral generator opts not to invest in advanced fuel because the cost of the investment,\$13, exceeds its expected marginal benefit of investment, $E[\mu_{i,s}]=\$12.5$. However, the situation changes when the generator is modeled as risk-averse. In this case, decision-making shifts from relying on the expected marginal benefit to the risk-adjusted marginal benefit, represented by $\sum_s q_{i,s}\mu_{i,s}$. As risk aversion increases, the generator becomes more willing to invest in advanced fuel, placing greater weight on adverse scenarios. Specifically, the generator assigns higher risk-adjusted probabilities to the second scenario, where intraday fuel, real time deviation, and reserve closeout costs are elevated.

This shift in behavior leads to a convergence of profits across both scenarios toward zero, stemming from a strong aversion to downside risk. The generator is effectively sacrificing potential upside in more favorable scenarios to eliminate the possibility of incurring losses. Advanced fuel investment acts as a risk mitigation strategy, particularly in the second scenario, where it significantly reduces the impact of incurring high intraday costs and penalties for deviations from day-ahead commitments. Although this investment slightly diminishes profitability in the more favorable first scenario, it offers substantial protection in the second, thereby aligning with the generator’s risk-averse objective to stabilize outcomes across uncertain futures. 
\begin{table}
    \caption{Single Generator EMIR Model Outputs}
    \label{tab:singlegenemo}
    \centering
    \begin{tabular}{|c|c|c|c|c|c|c|c|} \hline 
        $\alpha_1$ & $V_1^{DA}$  &$g^{DA}_1$ &$e_1$& $Z_{1,1}$ & $Z_{1,2}$ & $q_{1,1}$ &$q_{1,2}$\\ \hline 
         1&  0 &90 &0&  263&  -263&  0.5&0.5\\ \hline 
         0.7&  75 &90 &0&  75&  -30&  0.28&0.72\\ \hline 
         0.4&  75 &66.97 &23.03&  0&  0&  0.23&0.77\\ \hline
    \end{tabular}
    
\end{table}

Table \ref{tab:emo1gen} shows results in a market with energy only. The generator does not invest in any advance fuel regardless of its risk level.  There is little difference in the behavior of the generator across different risk levels and it chooses to not participate in the day-ahead market or invest in advance fuel. By purchasing intraday fuel, it can recover its scenario dependent fuel costs through $\lambda^{RT}_s$, resulting in zero profits at every risk level. 

\begin{table}
    \caption{Single Generator EMO Model Outputs}
    \label{tab:emo1gen}
    \centering
    \begin{tabular}{|c|c|c|c|c|c|c|} \hline 
        $\alpha_1$ & $V_1^{DA}$  &$g^{DA}_1$ & $Z_{1,1}$ & $Z_{1,2}$ & $q_{1,1}$ &$q_{1,2}$\\ \hline 
         1&  0 &0 &  0&  0&  0.5&0.5\\ \hline 
         0.7&  0  &0&  0&  0&  0.4&0.6\\ \hline 
         0.4&  0  &0&  0&  0&  0.4&0.6\\ \hline
    \end{tabular}
   
\end{table}

A fundamental difference in the behavior of the generators in the two difference market paradigms is the presence of forecast energy requirement constraint, which in the EMIR model forces the generator to provide either energy imbalance reserve or day-ahead energy. The generator must in some way participate in the day-ahead market, which  creates variability in its profits across the different scenarios. In scenarios where fuel costs, real time deviations, and closeout costs are low, the generator realizes positive profits, but when those same costs are higher it incurs losses. For risk-averse generators, the primary way to reduce losses in the unfavorable scenarios is to invest in advanced fuel, which provides a hedge by lowering fuel costs in scenarios when intraday costs exceed the fixed cost of advanced fuel. This investment smooths profits across scenarios and reduces the generator’s exposure to downside risk. Thus, risk aversion drives greater fuel investment in the EMIR model compared to the EMO model, where no forward commitment is enforced and profits remain constant at zero across scenarios.

However, the observation that advanced fuel investment is greater in the EMIR model is not universally true, it critically depends on the strike price. For example, take a risk-averse generator ($\alpha_1 = 0.5$) and three different strike prices, K = \$5, \$10, and \$15. It's fuel investment and risk-adjusted probabilities are shown in Table \ref{tab:kvarysinglegen}. As the strike price increases, both fuel investment and the emphasis it places on its typically 'risky' scenario decreases. This is because the source of its risk, the option closeout cost, is decreasing as the strike price grows. Additionally, there is a preference for the generator to provide energy imbalance reserve instead of day-ahead energy. As K increases, the generator's benefit of providing day-ahead energy over reserve decreases.  Consider how much a generator would earn from choosing to provide day-ahead energy instead of reserve. It earns the price of energy and reserve and faces a deviation cost on each unit of DA energy, $\lambda^{DA} + \rho -\lambda^{RT}_s$, and forgoes the reserve clearing price and closeout charge, $\rho -[\lambda^{RT}_s-K]^+$. When the closeout charge is positive, the costs of providing energy over reserve reduce to  $\lambda^{DA} - K$, so as K increases above the day-ahead price, one can expect to see the overall amount of EIR awarded rise and day-ahead energy cleared fall, which is evident in the $g_i^{DA}$ and $e_i$ columns of Table \ref{tab:kvarysinglegen}. We state this more precisely below. 

{\bf{Observation:}} Note that the EIR mechanism provides a bona fide mechanism to mitigate risk and improve the generator's risk adjusted position.
In particular, we can compare the profits attached to the scenarios of importance under risk aversion. For such a scenario $s$, the profit received by the generator in the EMO context is given by 
$$
\begin{array}{c}
  Z_{i,s} = \lambda^{DA}g^{DA}_i +\lambda^{RT}_s(g^{RT}_{i,s}-g^{DA}_i)- \\ C_ig^{RT}_{i,s}- C^{F}_iV_{i}^{DA} - C^{I}_{i,s}V_{i,s}^{RT}.\\
 \end{array}
$$

The same choices of decision variables would result in a feasible solution for the EMIR model. However, note that given the possibility of investing in EIRs, the generator can offer (and be dispatched) $e_i = \epsilon$ for EIRs and reduce their day-ahead generation to $g_i^{DA} - \epsilon$. For a high enough value of $K$, and small enough $\epsilon$, the generator would have an incentive to make this change. Consider the day-ahead profits (in scenario $s$). This would change from $(\lambda^{DA} - \lambda^{RT})g_i^{DA}$ to 
$(\lambda^{DA} - \lambda^{RT})g_i^{DA} + \rho g_i^{DA} - (\lambda^{DA} - \lambda_s^{RT} -\rho -[\lambda_s^{RT}-K]^+) \epsilon.$ (Note that the real time profits would remain the same). Clearly for a proper choice of $\epsilon >0$, the latter profits are greater than the former hence the generator will choose to sell EIRs and improve their risk adjusted profits. 

\begin{table}
 \caption{Single Generator EMIR Model Solutions Under Different K's}
    \label{tab:kvarysinglegen}
    \centering
    \begin{tabular}{|c|c|c|c|c|c|c|c|} \hline 
        K & $V_1^{DA}$  &$g^{DA}_1$ &$e_1$& $Z_{1,1}$ & $Z_{1,2}$ & $q_{1,1}$ &$q_{1,2}$\\ \hline 
         5&  90&90 &0&  0&  0&  0.13&0.87\\ \hline 
         10&  75 &64.18&25.82&  0&  0&  0.16&0.84\\ \hline 
         15&  0&0&93.02&  0&  0&  0.5&0.5\\ \hline
    \end{tabular}
   
\end{table}

\subsection{Two Generators Example}

In this section, we present results from a numerical example with two competitive generators and five real time scenarios. The demand side agent participates in the day-ahead market in this example. Input parameters are presented in Tables \ref{tab:2genins1} and \ref{tab:2genins2}. The purpose of these numerical results, similar to the previous section, is to explore the differences in solutions under different market structures, namely the generator investment decisions, consumer costs, day-ahead demand, etc. as we vary market structures, risk-attitudes, and, when relevant,  K values. 
\begin{table}
 \caption{Two Generator Inputs}
    \label{tab:2genins1}
    \centering
    \begin{tabular}{|c|c|c|c|c|c|} \hline 
         Scenario&  1&  2&  3&  4& 5\\ \hline 
         $D^{RT}_s$& 50 &75  &  100&  125& 150\\ \hline 
         $C^I_{i,s}$&  15&  20&  30&  50& 100\\ \hline 
          $R_{i,s}$&  10&  10&  10&  10& 10\\ \hline 
 $\pi_s$& 0.2& 0.2& 0.2& 0.2&0.2\\ \hline
    \end{tabular}
   
\end{table}
\begin{table}
  \caption{Two Generator Inputs}
    \label{tab:2genins2}
    \centering
    \begin{tabular}{|c|c|c|c|} \hline 
         &  $Q_i$&  $C_i$&  $C^{F}_i$\\ \hline 
         Gen 1&  100&  0&  50\\ \hline 
         Gen 2&  100&  5&  50\\ \hline
    \end{tabular}
  
\end{table}

\subsubsection{Relationship Between $\alpha_i$ and $V_i^{DA}$}
Figure \ref{riskvfuel} shows aggregate fuel investment for both generators under a range of $\alpha_i$ values from 1 to 0.1 (in increments of 0.1) for the models with and without EIR. $\alpha_i$ is the same for both generators. The strike price is constant for all $\alpha_i$ and is set at value of K = \$50/MWh. In the EMIR model, there is a trend for generators to purchase more advanced fuel as they become more risk averse. At $\alpha_i$ levels of 1, 0.9, and 0.8, the generators purchase no advanced fuel in both models. When $\alpha_i$ 0.7 or lower, the aggregate fuel purchase in the EMIR model increases until it reaches a level equal to the load forecast. The advanced fuel investment in the EMO is zero for all $\alpha_i$ values, for similar reasons to why the single generator does not invest in advanced fuel in the simple example shown in \ref{tab:singlegenemo}. The generators are not required to participate in the day-ahead energy market, which reduces their exposure to risk, and allows them to achieve sufficient risk adjusted profits without investing in advanced fuel. 
\begin{figure}
         \centering
         \includegraphics[scale=.6]{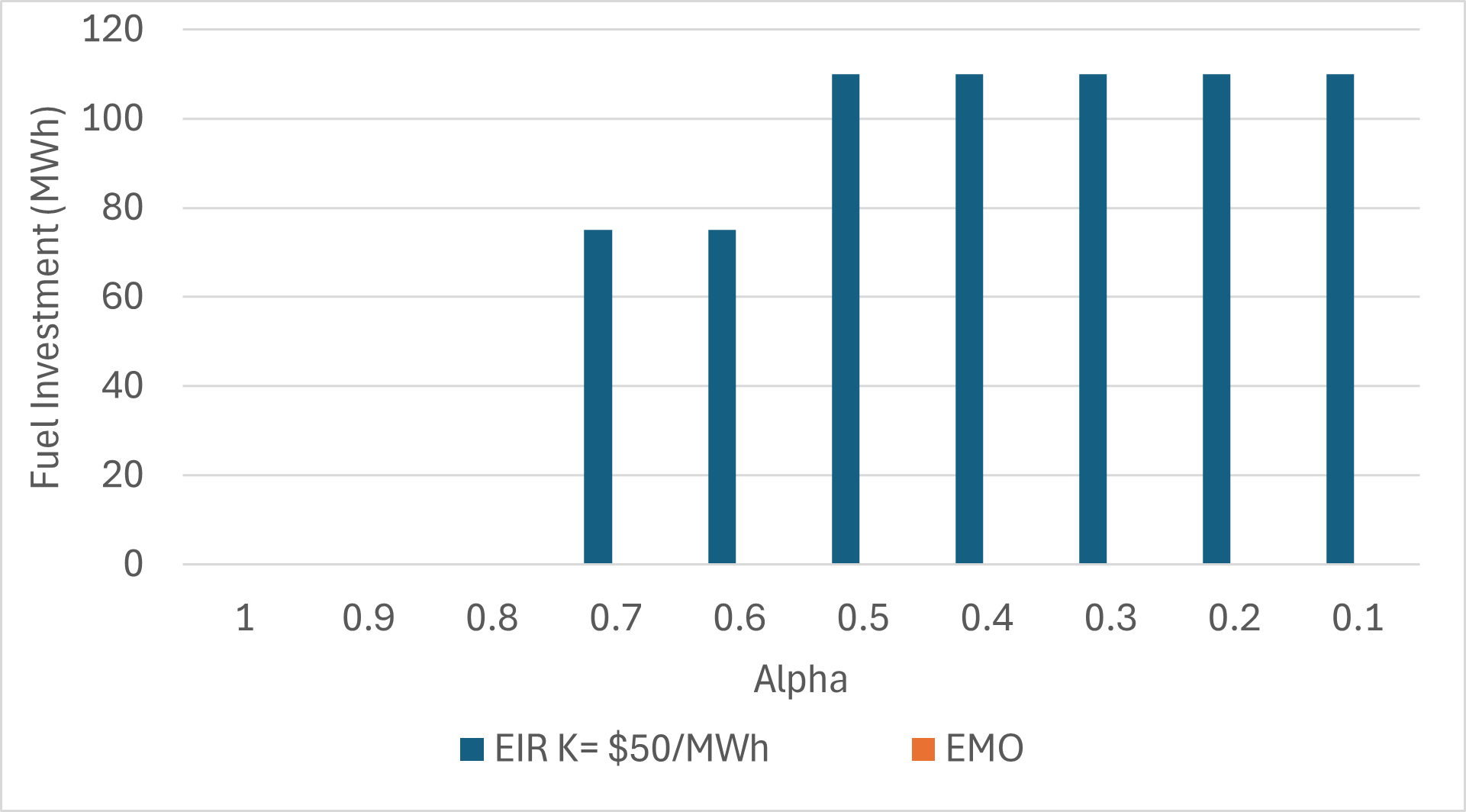}
         \caption{Advanced Fuel Investment vs Alpha when K = \$50/MWh }
         \label{riskvfuel}
\end{figure}
\subsubsection{Relationship Between K and $V_i^{DA}$/$e_i$/$g_i^{DA}$}
The investment in advanced fuel changes with respect to the strike price. Figure \ref{kvsfuel} shows aggregate fuel investment for two generators with $\alpha_i$ = 0.6 and a strike price, K, that varies from \$0 to \$100/MWh. Observe that as the strike price increases, the aggregate fuel investment decreases. As K increases, the size of the closeout cost diminishes, furthermore, this means the corresponding risk adjusted probability of a payout scenario also decreases. From the perspective of the generator,  when K is high there is limited risk associated with providing EIR and less incentive for them to change their behavior as result. When K $\le$ \$60/MWh, aggregate advanced fuel investment is equal to the forecasted energy requirement. This relationship highlights the importance of ISO-NE's decision in setting the strike price as it is the tool to provide generators with proper incentives to invest in advanced fuel. In our example, once $K>$ \$90/MWh, generators utilize the spot market only to backup their real time generation. 

\begin{figure}
         \centering
         \includegraphics[scale=.65]{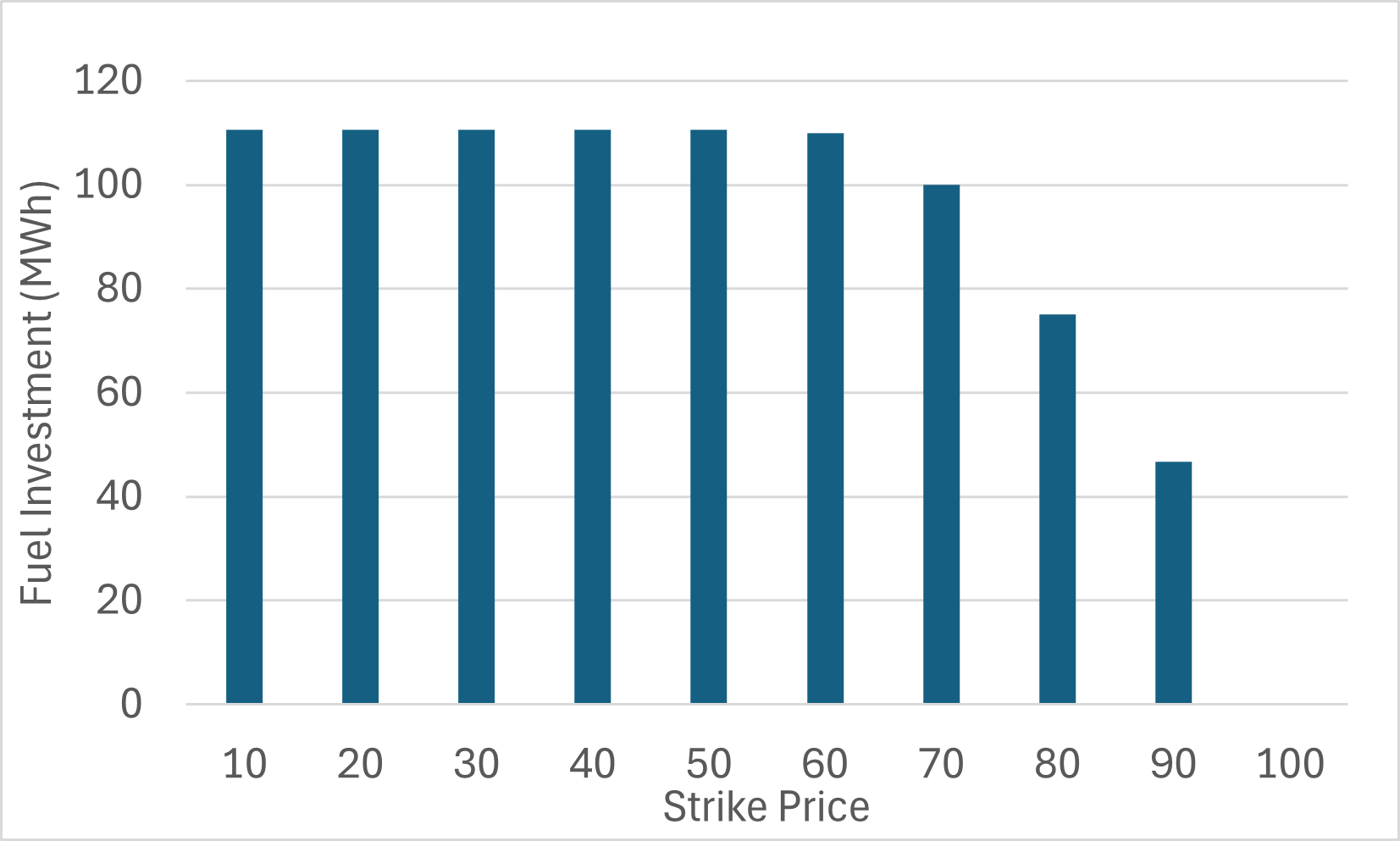}
         \caption{Advanced Fuel Investment vs Strike Price When $\alpha_{1}$=$\alpha_2$ = 0.6 in EMIR Model }
         \label{kvsfuel}
\end{figure}
It is notable that the fuel investment is identical for all strike prices below \$60/MWh, which is close to \$20 above the expected real time price. This shows that, in this example with the assumption that both agents' $\alpha_i = 0.6$, the incentives for fuel investment are not greatly impacted when the strike price is slightly higher than the expected real time price.
Changing the strike price impacts the ratio of day-ahead generation and energy imbalance reserve awarded in order to satisfy the FER constraint. Figure \ref{kvsdaeir} shows the total day-ahead and EIR quantities under a range of K values. As K increases, generators offer higher amounts of EIR relative to day-ahead generation. At the lower end of the strike price range, generators provide only $g_i^{DA}$ and no EIR is traded. When K $>$ \$90, FER is mostly met through energy imbalance reserve. This is relevant because from a producer's perspective providing EIR when K is larger results in a smaller per unit cost compared to day-ahead energy. EIR is charged $[\lambda^{RT}_s - K]^+$, whereas DA energy is charged $-\lambda^{RT}_s$.  The smaller costs from providing EIR relative to day-ahead generation allows the generator to achieve sufficient profits without having to invest in day-ahead fuel, resulting in diminished $V_i^{DA}$ as shown in Figure \ref{kvsfuel}. 

\begin{figure}
         \centering
         \includegraphics[scale=.6]{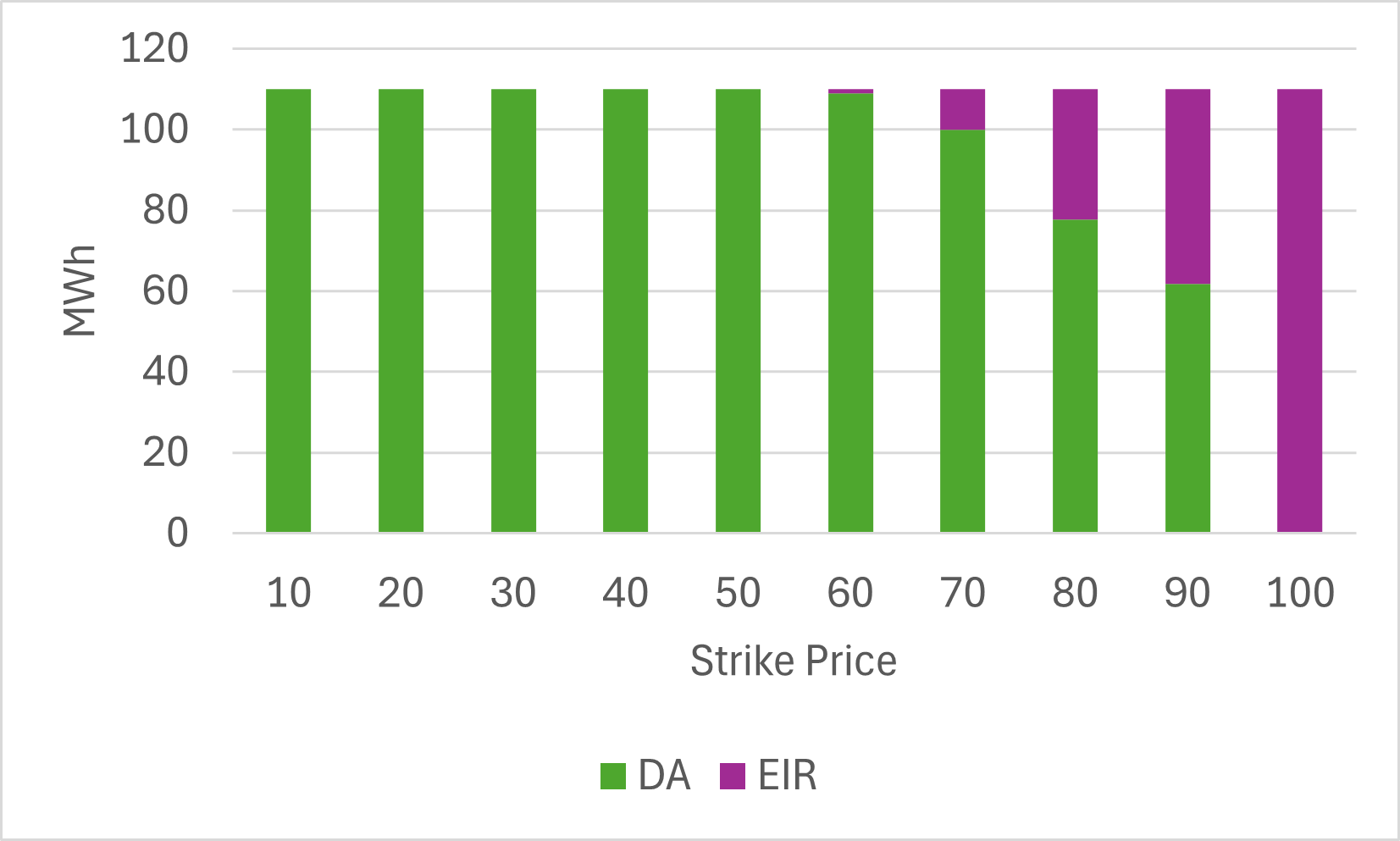}
         \caption{Day-ahead and EIR Awards vs Strike Price When $\alpha_{1}$=$\alpha_2$ = 0.6 }
         \label{kvsdaeir}
\end{figure}

\subsubsection{Risk Averse Demand Agent}
All previous results involved a risk-neutral demand agent. When the demand agent becomes risk-averse, it increases its day-ahead demand purchases above and beyond the load forecast in both the EMIR and EMO models. Table \ref{tab:ddarad} shows physical day-ahead purchases across a range of $\alpha^D$ values when generators are risk neutral. Across all models, physical demand purchases day-ahead energy in excess of the highest real time demand scenario at every $\alpha_i$ value below 1. In other words, a risk averse demand agent buys more energy on the DA market than it will consume in real time regardless of whether EIR is present or not.
\begin{table}[H]
\caption{$d^{DA}$ with Risk Averse Demand Agent}
    \label{tab:ddarad}
    \centering
    \begin{tabular}{|c|c|c|} \hline 
 & EIR& EMO\\ \hline 
         1&  
61.4305
&  3.00217
\\ \hline 
         .9&  
166.541
&  
166.665
\\ \hline 
         .8&  
166.541
&  
166.665
\\ \hline 
         .7&  
166.541
&  166.665
\\ \hline 
         .6&  175.818
&  
175.861
\\ \hline 
         .5&  172.222
&  172.222
\\ \hline 
         .4&  172.222
&  172.222
\\ \hline 
         .3&  172.222
&  172.222
\\ \hline 
         .2&  172.059
&  172.059
\\ \hline 
         .1&  172.059&  172.059\\ \hline
    \end{tabular}
    
\end{table}

This occurs for a range of different strike prices, as shown in Tables \ref{tab:ddavsk}.

\begin{table}[h]
    \caption{$d^{DA}$ vs K with Risk Averse Demand($\alpha^D = 0.5)$}
    \label{tab:ddavsk}
    \centering
    \begin{tabular}{|c|c|}\hline
 K&$d^{DA}$\\\hline \hline 
         10& 172.222
\\ \hline 
         20& 172.222
\\ \hline 
         30& 172.222
\\ \hline 
         40& 172.222
\\ \hline 
         50& 172.222
\\ \hline 
         60& 172.222
\\ \hline 
         70& 172.222
\\ \hline 
         80& 172.222
\\ \hline 
         90& 172.222
\\ \hline 
         100& 172.222\\ \hline
    \end{tabular}
    
\end{table}

\section{Comparison Between Energy Imbalance Reserve Market and Energy Only Market with Load Forecast Constraint}
The introduction of EIR is one possible policy choice to incentivize advanced fuel investment and raise day-ahead energy to the level of the system operator's load forecast. Consider another choice, where the system operator simply constrains the day-ahead generation to be greater than or equal to the load forecast, $\sum_i g_i^{DA} \ge FER$. Many system operators do in fact schedule units ahead of real time using a load forecast in place of demand bids, in a an out-of-market process that goes by different names across ISO/RTO regions (Reserve Adequacy Analysis in ISO-NE, Reliability Unit Commitment in ERCOT/PJM, Residual Unit Commitment in CAISO, etc. \cite{ecco_international_design_nodate}). In this section, we examine what would occur if the load forecast constraint enforced in the out of market reliability commitment process was brought into the day-ahead market and compare this to the EMIR model results. The full MCP formulation of the energy market with load forecast constraint is presented in Appendix IX.B.
\subsection{Equilibrium Model for Energy Market Only with Load Forecast Constraint}
\subsubsection{Generators' Profit Maximization Problem} 
The individual generator's optimization problem is to maximize their risk adjusted profits. The generator's problem without EIR products is presented below with the associated dual variables in brackets to the right of the constraints. The term, $\lambda^{LF}$, is a price paid to physical generators that assist in satisfying the FER constraint on top of the $\lambda^{DA}$. $\lambda^{LF}$ is the shadow price of the load forecast constraint shown in \ref{DALF}.

\begin{subequations}
    \begin{align}
        \max_{\Xi} & \textcolor{white}{h}\eta_i- \frac{1}{a_i}\sum_s \pi_su_{i,s}\\
\mbox{s/t}&\textcolor{white}{h} u_{i,s} \ge \eta_i - Z_{i,s}&[q_{i,s}] \\
&g^{DA}_i \le Q_i &[\delta_i] \\
& g^{RT}_{i,s} \le Q_i &[\gamma_{i,s}]\\
& g^{RT}_{i,s} = F_i + V_{i}^{DA}  + V_{i,s}^{RT}- W_{i,s}^{RT} &[\mu_{i,s}]\\
& g^{RT}_{i,s},g^{DA}_i,V^{DA}_i,V^{RT}_{i,s},W^{RT}_{i,s},u_{i,s} \ge 0 &  \forall s 
    \end{align}
\end{subequations}
where
\begin{align}
\Xi := (g^{RT}_{i,s},g^{DA}_i,u_{i,s},V^{DA}_i,V^{RT}_{i,s},W^{RT}_{i,s},\eta_i,u_{i,s})\nonumber\\
  Z_{i,s} := (\lambda^{DA}+\lambda^{LF})g^{DA}_i +\lambda^{RT}_s(g^{RT}_{i,s}-g^{DA}_i)-\label{Gen_CVARprofit_EMOLF} \\ C_ig^{RT}_{i,s}- C^{F}_iV_{i}^{DA} - C^{I}_{i,s}V_{i,s}^{RT} + R_{i,s}W_{i,s}^{RT} \nonumber
 \end{align}
\subsubsection{Demand's Profit Maximization Problem in EMO}\label{EMO_demand_profit_section} 
In this section, we provide a demand agent that represents the aggregate of all demand side market participants and pays for all energy in EMO. $d^{DA}$ is the sole decision variable of this agent, as we assume their real time loads are fixed and represented by $D^{RT}_s$. Demand agent's problem in absence of EIR options can be described as follows:

\begin{subequations}
    \begin{align}
        \underset{{\eta^D,u_s^D,d^{DA}}}{\mbox{max}} & \eta^D - \frac{1}{\alpha^D}\sum{\pi_s}u_{s}^D &  \\
\mbox{s/t}&\textcolor{white}{y} u_{s}^D \ge \eta^D - Z_{s}^D&[q_{s}^D]& \\
& u_{s}^D,d^{DA} \ge 0 & \forall s
    \end{align}
\end{subequations}
where
\begin{equation}
  Z_{s}^D := -\lambda^{DA}d^{DA}-\lambda^{LF}FER-\lambda^{RT}_s(D^{RT}_{s}-d^{DA}) \label{dem_CVARprofit_EMOLF} \\
\end{equation}

\subsubsection{Arbitragers' Profit Maximization Problem in EMO} \label{EMO_arbitragers_profit_section} 
Arbitragers' problem is identical the formulation shown in VI-A3.

\subsubsection{Market Clearing Conditions in EMO}
The previous day-ahead and real time market clearing conditions are present in this formulation. 
\begin{equation}
0 \le  \sum_i g^{DA}_i - FER  \perp \lambda^{LF} \ge 0 \label{DALF}
 \end{equation}
\subsection{Numerical Results}
Using the same inputs from the previous numerical results with a single generator and two generators, shown in Tables \ref{tab:onegeninputs},\ref{tab:2genins1} and \ref{tab:2genins2}, we compare the outcomes of the models with model to those of the EMO with load forecast. 
\subsubsection{Single Generator and Two RT Scenarios Example}
 Table \ref{tab:lfsingle} shows the solutions for this case for different $\alpha_i$ values.  We observe a similar trend to the case with EIR; as the generator becomes more risk-averse it invests in more advanced fuel, its profits across the two scenarios converge to zero, and it places greater weight on the second scenario. However, an important difference is that these results are solely due to the presence of the load forecast constraint. This suggests that structural changes to the market—such as mandatory forward scheduling based on system forecasts—may offer a viable alternative or complement to financial reserve products like the EIR. Such approaches could deliver similar reliability and investment incentives without the need for complex instruments.
\begin{table}[h]
\caption{Single Generator EMO with Load Forecast Solutions}
    \label{tab:lfsingle}
    \centering
    \begin{tabular}{|c|c|c|c|c|c|c|} \hline 
        $\alpha_1$ & $V_1^{DA}$  &$g^{DA}_1$ & $Z_{1,1}$ & $Z_{1,2}$ & $q_{1,1}$ &$q_{1,2}$\\ \hline 
         1&  0 &90 &  225&  -225&  0.5&0.5\\ \hline 
         0.7&  75&90&  75&  -30&  0.28&0.72\\ \hline 
         0.4&  90&90&  0&  0&  .13&.86\\ \hline
    \end{tabular}
    
\end{table}

\subsubsection{Two Generators and Five RT Scenarios Example}
Figure \ref{riskvfuelloadforecast} shows aggregate fuel investment for both generators under a range of $\alpha_i$ values from 1 to 0.1 (in increments of 0.1) for the models with and without EIR. The strike price is constant for all $\alpha_i$ and is set at value of K = \$50/MWh. In both the EMIR and EMO with a load forecast constraint, there is a trend for generators to purchase more advanced fuel as they become more risk averse. At $\alpha_i$ levels of 1, 0.9, and 0.8, the less risk-averse generators purchase either none or negligible amounts of advanced fuel advance because the value of fuel investment is less than the cost of advanced fuel investment, $\sum_s \mu_{i,s} < C^F_i$. When $\alpha_i$ is less than or equal to 0.5, the aggregate fuel purchase is the same in both models and is equal to the load forecast. At the middle levels of risk ($\alpha_i = 0.8-0.6$), there is a difference between the level of aggregate fuel investment across the two models, with the largest discrepancy when $\alpha=0.7$ resulting in 25 MWh more fuel investment in the EMO with a load forecast. 
\begin{figure}[H]
         \centering
         \includegraphics[scale=.6]{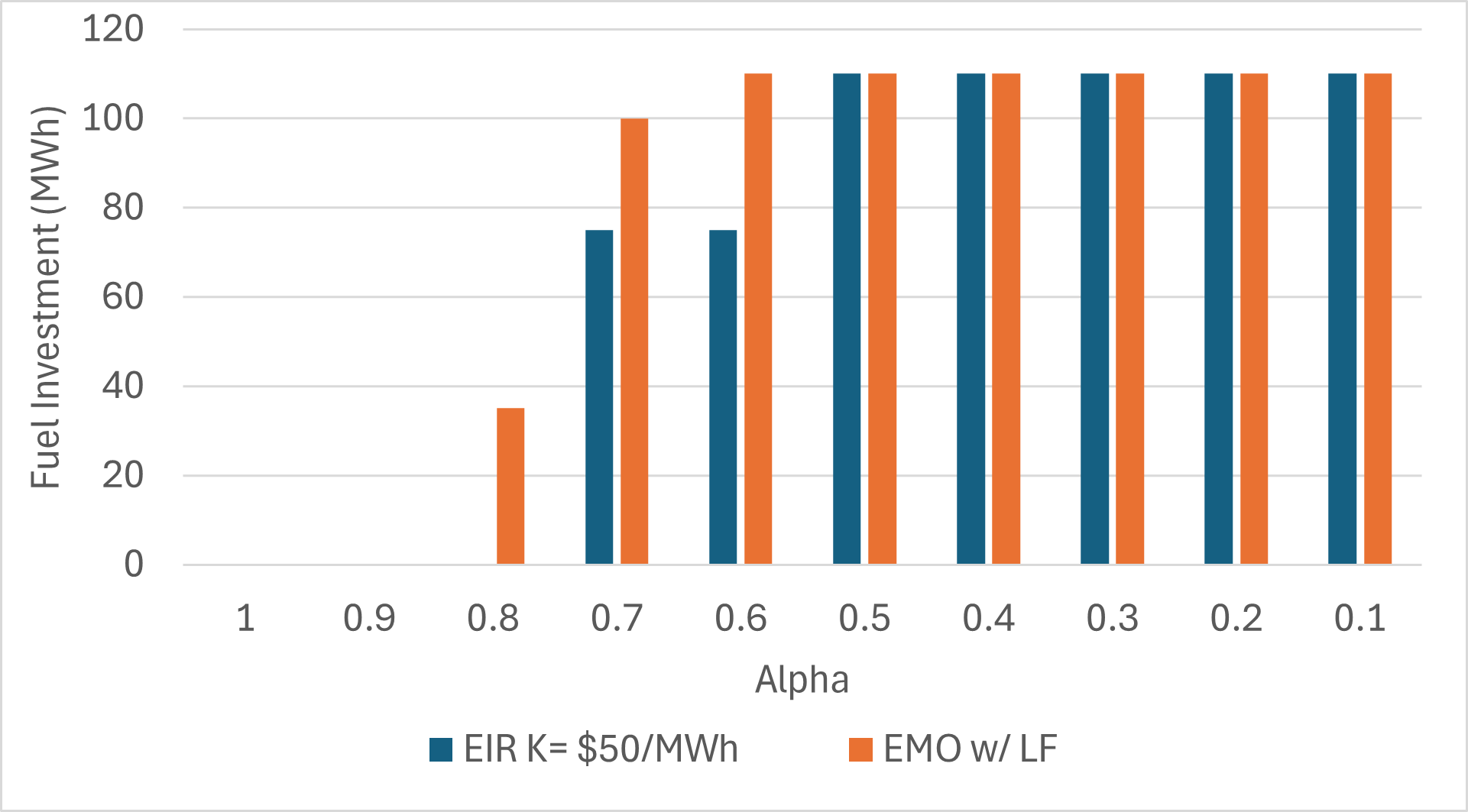}
         \caption{Advanced Fuel Investment vs Alpha when K = \$50/MWh }
         \label{riskvfuelloadforecast}
\end{figure}

\section{Discussion}

In this paper, we introduced, modeled, and analyzed the potential effects of ISO New England’s Energy Imbalance Reserve (EIR) product by developing simple equilibrium models that incorporate risk-averse producers and consumers, both with and without the EIR mechanism. We model risk using conditional value at risk, a coherent risk measure with convexity assumptions. Similarly, our simplified models are convex, ignoring unit commitment and other intertemporal constraints. Our analysis revealed that under risk-neutral assumptions, outcomes in the two market designs—EMIR and EMO—were nearly indistinguishable. In the risk neutral case, the presence of the EIR had negligible impact on supply or demand side behavior, indicating that the reserve product’s effectiveness depends on market participants’ risk attitudes.

When risk aversion was introduced, especially on the supply side, notable differences emerged between the two paradigms. Risk-averse generators in the EMIR model were more inclined to undertake higher-cost investments in advanced fuel to mitigate their risk. These investments were particularly sensitive to the strike price parameter K. As K increased, the incentive to invest in advanced fuel declined due to the lower option closeout charge, while the likelihood of EIR provision increased. This trade-off underscores the importance of carefully choosing K to balance incentives for both reliability and cost-effective reserve provision.

On the demand side,  risk-averse consumers exhibited increased participation in the day-ahead market across both EMIR and EMO models, suggesting that risk aversion alone, and not the presence of EIR—can drive larger day-ahead commitments from load.

While these findings offer valuable insights, translating them directly into policy recommendations should be done with caution. The EIR market is already operational, and empirical evaluations by independent market monitors will be crucial in assessing its real-world impacts. Our study was forward-looking and aimed at providing a theoretical lens on possible behavioral shifts prior to market implementation. Nonetheless, two key takeaways emerge from our work. Future work could extend this analysis by incorporating additional features, such as strategic behavior by market participants or transmission constraints to see how a nodal vs system wide strike price can impact economic incentives. Empirical studies comparing real-world EIR performance against results of this paper will  be instrumental in refining modeling approaches going forward.

\section{Acknowledgment}
This work was funded by the Independent System Operator of New England and authors Tongxin Zheng and Jinye Zhao are employees of the Advanced Technology Solutions department of ISO-NE. The authors would like to thank Andrew Withers, Zeky Murra-Anton, Parviz Alivand and Geoffrey Pritchard for their informative discussions and comments. 
\section{Appendix}

\scriptsize
\subsection{Risk Neutral Models}
EMO
\begin{align}
0 \le C_i\pi_s + \mu_{i,s} - \lambda^{RT}_s\pi_s + \gamma_{i,s} & \perp g^{RT}_{i,s} \geq 0 \label{RT1dem}\\
0 \le Q_i - g^{RT}_{i,s} & \perp \gamma_{i,s} \ge 0 \label{RT2dem} \\
g^{RT}_{i,s} = F_i + V_{i,s}^{RT} + V_{i}^{DA} - W_{i,s}^{RT} & \perp  \mu_{i,s} \mbox{ free} \label{RT3dem}\\
0 \le C^{F}_i - \sum_s\mu_{i,s} & \perp V_{i}^{DA} \ge 0 \label{DA1dem} \\
0 \le C^{I}_{i,s}\pi_s - \mu_{i,s} & \perp V_{i,s}^{RT} \ge 0 \label{RTFdem}\\
0 \leq \mu_{i,s} - R_{i,s}\pi_s& \perp W_{i,s}^{RT} \geq 0 \label{wconstrdem}\\
0 \le \sum_s \pi_s\lambda^{RT}_{s} - \lambda^{DA}  + \delta_i & \perp g^{DA}_i \ge 0 \label{DA2dem}\\ 
0 \le Q_i - g^{DA}_i & \perp \delta_i \ge 0 \label{DACapdem}\\
0 \le \sum_s \pi_s\lambda^{RT}_s - \lambda^{DA} & \perp a \ge 0 \label{ARBT1dem} \\
0 \le \lambda^{DA} - \sum_s \pi_s\lambda^{RT}_s & \perp b \ge 0 \label{ARBT2dem} \\
\sum_i g^{DA}_i + a = d^{DA} + b & \perp \lambda^{DA} \mbox{ free} \label{MC1dem}\\
\sum_i g^{RT}_{i,s} = D^{RT}_s  & \perp \lambda^{RT}_s \mbox{ free} \label{MC2dem}\\
   0 \le \lambda^{DA} - \sum_s \pi_s\lambda^{RT}_s & \perp d^{DA}\ge 0  \label{DDA1dem} 
\end{align}
EMIR
\begin{align}
0 \le C_i\pi_s + \mu_{i,s} - \lambda^{RT}_s\pi_s + \gamma_{i,s} & \perp g^{RT}_{i,s} \ge 0 \label{eirRT1dem}\\
0 \le Q_i - g^{RT}_{i,s} & \perp \gamma_{i,s} \ge 0  \label{eirRT2dem} \\
g^{RT}_{i,s} = F_i + V_{i}^{DA} + V_{i,s}^{RT} - W_{i,s}^{RT} & \perp  \mu_{i,s} \mbox{ free} \label{eirRT3dem} \\
0 \le C^{F}_i - \sum_s \mu_{i,s} & \perp V_{i}^{DA} \ge 0 \label{eirDAFdem}\\
0 \le C^{I}_{i,s}\pi_s - \mu_{i,s} & \perp V_{i,s}^{RT} \ge 0 \label{eirRTFdem}\\
0 \leq \mu_{i,s} - R_{i,s}\pi_s & \perp W_{i,s}^{RT} \geq 0 \label{eirwconstrdem}\\
0 \le \sum_s \pi_s\lambda^{RT}_{s} - \lambda^{DA} - \rho + \delta_i & \perp g^{DA}_i \ge 0 \label{eirDA1dem}\\ 
0 \le Q_i - g^{DA}_i - e_i & \perp \delta_i \ge 0 \label{eirDA2dem} \\
0 \le \sum_s \pi_s({\lambda}^{RT}_{s} - K)^+ - \rho +\delta_i & \perp e_i \ge 0 \label{eirEIRdem} \\
0 \le \sum_s \pi_s\lambda^{RT}_s - \lambda^{DA} & \perp a \ge 0 \label{eirARBT1dem} \\
0 \le \lambda^{DA} - \sum_s \pi_s\lambda^{RT}_s & \perp b \ge 0 \label{eirARBT2dem}\\
\sum_i g^{DA}_i + a = d^{DA} + b & \perp \lambda^{DA} \mbox{ free} \label{eirMC1dem}\\
\sum_i g^{RT}_{i,s} = D^{RT}_s  & \perp \lambda^{RT}_s \mbox{ free} \label{eirMC2dem}\\
0 \le \sum_i g^{DA}_i + \sum_i e_i - FER  & \perp \rho \ge 0 \label{eirMC3dem}\\
0 \le \lambda^{DA} - \sum_s \pi_s\lambda^{RT}_s & \perp d^{DA}\ge 0  \label{eirDDA1dem} 
\end{align}
\scriptsize{
\printbibliography
}
\newpage

\vfill

\end{document}